\newcommand*\Bigcdot{\mathpalette\Bigcdot@{.5}}
\newcommand*\Bigcdot@[2]{\mathbin{\vcenter{\hbox{\scalebox{#2}{$\m@th#1\bullet$}}}}}
\long\def\comment#1{}
\newtheorem{thm}{Theorem}
\newtheorem{lemm}{Lemma}
\newtheorem{remark}{Remark}
\def\figref#1{Fig.~\ref{#1}}
\def\be{\begin{equation} }
\def\ee{\end{equation}}
\newlength{\myimagewidth}
\newlength{\myimageheight}
\title{A Comprehensive Analysis of 5G Heterogeneous Cellular Systems operating over $\kappa$-$\mu$ Shadowed Fading Channels}
\begin{document}
\setlength{\myimagewidth}{\dimexpr\linewidth/2-1em\relax}
\setlength{\myimageheight}{85 mm}

\author{
   \IEEEauthorblockN{Young Jin Chun, Simon L. Cotton, Harpreet S. Dhillon,\\ F. Javier Lopez-Martinez, Jos\'e F. Paris, and Seong Ki Yoo}
   \thanks{
Y. J. Chun, S. L. Cotton and S. K. Yoo are with the Wireless Communications Laboratory, ECIT Institute, Queens University Belfast, United Kingdom.
   H. S. Dhillon is with Wireless@VT, Department of Electrical and Computer Engineering, Virginia Tech, Blacksburg, VA, USA.
   F. J. Lopez-Martinez and J. F. Paris are with Departmento de Ingenier\'ıa de Comunicaciones, Universidad de Malaga, Malaga, Spain. 
   (Email: \{Y.Chun, simon.cotton, syoo02\}@qub.ac.uk, hdhillon@vt.edu, \{fjlopezm, paris\}@ic.uma.es). 
   }
   }

\maketitle

\vspace{-12mm}
{
\it \bf Abstract}-
{
Emerging cellular technologies such as those proposed for use in 5G communications will accommodate a wide range of usage scenarios with diverse link requirements. This will include the necessity to operate over a versatile set of wireless channels ranging from indoor to outdoor, from line-of-sight (LOS) to non-LOS, and from circularly symmetric scattering to environments which promote the clustering of scattered multipath waves. Unfortunately, many of the conventional fading models lack the flexibility to account for such disparate signal propagation mechanisms. To bridge the gap between theory and practical channels, we consider $\kappa$-$\mu$ shadowed fading, which contains every linear fading models proposed in the open literature as special cases. In particular, we propose an analytic framework to evaluate the average of an arbitrary function of the SINR over $\kappa$-$\mu$ shadowed fading channels by using a simplified orthogonal expression with tools from stochastic geometry. Using the proposed method, we evaluate the spectral efficiency, moments of the SINR, and outage probability of a $K$-tier HetNet with $K$ classes of BSs, differing in terms of the transmit power, BS density, shadowing and fading. Building upon these results, we provide important new insights into the network performance of these emerging wireless applications while considering a diverse range of fading conditions and link qualities.
}

\clearpage



\section{Introduction}

To meet the ever-increasing demand for data on the move, telecommunications industries, as well as global standardization entities, are actively driving the research and development of the fifth generation (5G) of wireless communications. It is forecast that this new networking paradigm will provide $1000$ fold gains in capacity over the next decade and data rates exceeding $10$ Gigabit/s while achieving latencies of less than $1$ millisecond \cite{5GPPP2016,Nokia2014}. To make this possible, 5G communications will utilize densely deployed small cells to achieve high spectral efficiency while harnessing all available spectrum resources, including opportunities offered by millimeter-wave frequencies. Key to the successful operation of 5G communications will be the unification of dissimilar networking technologies. This will create a diverse range of link requirements and the necessity for wireless devices to operate over a versatile set of channels ranging from indoor to outdoor, from line-of-sight (LOS) to non-LOS (NLOS), and from circularly symmetric scattering to those which promote the clustering of scattered multipath waves.

A range of tools developed within the framework of stochastic geometry have been used to capture the irregularity and heterogeneity of 5G wireless networks with considerable success. Specifically, stochastic geometry assumes that the locations of all wireless nodes are endowed with a spatial point process \cite{Haenggi2013}. Such an approach captures the topological randomness in the network geometry, allows the use of well-established mathematical tools, offers high analytical flexibility and achieves an accurate performance evaluation \cite{Andrews2010}. A common assumption made within this scheme is that the nodes are distributed according to a Poisson point process (PPP). Using this supposition, the probability density function (PDF) of the aggregate interference and the outage probability were analyzed for cellular network in \cite{Mathar1995, Andrews2011}, which were generalized to the case of heterogeneous cellular networks (HetNets) in \cite{Jo2012,Mukherjee2012, Dhillon2012, Chun2015,Saha2016}\footnote{The aforementioned results represent only a subset of the related studies in stochastic geometry. The interested reader is directed to the work presented in \cite{Andrews2016,Elsawy2013,Haenggi2009} and the references therein for a more detailed overview of stochastic geometry.}.


Much of the existing published work on stochastic geometry has focused on the Rayleigh distribution as the small-scale fading model, owing to its simplicity and tractability. Several approaches have been proposed to derive the signal-to-noise-plus-interference ratio (SINR) distributions for general fading environments. For instance, in \cite{Blaszczyszyn2013,Keeler2013, Madhusudhanan2014a, Dhillon2014, Zhang2014} the conversion method, which is based on displacement theorem, was used. This method treats the channel randomness as a perturbation in the location of the transmitter and transforms the original network with arbitrary fading into an equivalent network without fading. Although the conversion method can be applied to any fading distribution, it is more tractable for handling large-scale shadowing effects. Specifically, if one applies the conversion method to small-scale fading, the resulting equivalent model will have no fading, thereby the Laplace transform-based approach can not be utilized.  An alternative approach to address general fading scenarios uses the series representation method \cite{Peng2014,Tanbourgi2014}. This approach expresses the interference functionals as an infinite series of higher order derivative terms given by the Laplace transform of the interference power. While the series representation method provides a tractable alternate for handling general fading, it often leads to situations where it is difficult to derive closed form expressions. Numerically evaluating a higher order derivative is also complex and prone to floating-point rounding errors \cite{Miel1985}. 

Aside from the small-scale fading, random shadowing due to obstacles in the local environment or human body movements (in the case of user equipments) can impact link performance by causing fluctuations in the received signal. Shadowing affects the transmission performance which will be especially pertinent in a dense network or millimeter-wave links. Hence, the combined effect of small-scale and shadowed fading needs to be properly addressed in 5G communications design. In this respect, composite channel models have been proposed in \cite{Ho1993,Abdi1998,Abdi2003,Paris2014,6953146}. In \cite{Ho1993}, the shadowed Nakagami fading distribution was first proposed by combining Nakagami-\textit{m} multipath fading and lognormal distributed shadowing. Later, \cite{Abdi1998} introduced the generalized-\textit{K} model by approximating the shadowing model in \cite{Ho1993} using the gamma distribution to improve analytical tractability. Traditional composite channel models (referred to as \textit{multiplicative shadow fading model}) assume that the shadowing affects the dominant components and the scattered waves equally, whereas, in practice, the shadowing often occurs only on the dominant components, which gives rise to a different kind of model and is referred to as \textit{LOS shadow fading model}. To model shadowing in the LOS channels, \cite{Abdi2003} proposed the Rician shadowed fading model by assuming a Rician distribution for the multipath fading and Nakagami-\textit{m} distribution for the LOS shadowing. More recently, \cite{Paris2014,6953146} proposed $\kappa$-$\mu$ shadowed fading model by assuming $\kappa$-$\mu$ multipath fading with shadowing of the dominant component.

The $\kappa$-$\mu$ shadowed fading model is an attractive proposition, not just due to its excellent fit to the fading observed in a range of real-world applications (e.g. device-to-device \cite{6953146}, underwater acoustic \cite{Sanchez2014}, body-centric fading channels \cite{Cotton2014}) but also its extreme versatility. More precisely, it is able to account for most of the popular fading distributions utilized in the literature.

Motivated by the comprehensive nature of the $\kappa$-$\mu$ shadowed fading model, we use it along with a stochastic geometric framework to derive the downlink SINR distribution of a typical user in a $K$-tier HetNet with $K$ classes of BSs, differing in terms of the transmit power, BS density, shadowing and fading characteristics. We evaluate the average of an arbitrary function of the SINR, which can be easily applied to other network models. For instance, it may be utilized to evaluate any performance measure that can be represented as a function of SINR, \textit{e.g.}, the spectral efficiency, outage probability, moments of the SINR, and error probability. 

The main contributions of this paper may be summarized as follows.
\begin{enumerate}
  \item The main difficulty with incorporating generalized small-scale fading models into stochastic geometry framework is the lack of tractability in expressing the PDF of the interference. In general, it is more convenient to express the metrics of interest in terms of the Laplace transform of the interference. Nonetheless, this presents significant challenges when extending the analyses from Rayleigh fading to the more general fading models. We overcome this problem by analyzing the Laplace transform of the interference over $\kappa$-$\mu$ shadowed channels to characterize the distribution of the interference from cellular user equipment (UE). It is worth highlighting that this model encompasses majority of the fading models proposed in the literature as special cases, including Rayleigh, Rician, Nakagami-\textit{m}, Nakagami-\textit{q}, One-sided Gaussian, $\kappa$-$\mu$, $\eta$-$\mu$, and Rician shadowed distribution to name but a few.

  \item We use tools from stochastic geometry to evaluate the distribution of the SINR, coverage probability and average rate for $\kappa$-$\mu$ shadowed fading. We also propose a numerically efficient method to calculate the average of an arbitrary function of the SINR. 
  
  \item We present numerical simulation results which provide useful insights into the performance of cellular networks for different fading conditions. In particular, we observe the trade-off relation between the rate and average SINR based on the channel parameters, such as the intensity of dominant signal components, the number of scattering clusters, and shadowing effect. This information will be of paramount importance to those responsible for designing future 5G network infrastructure to ensure that adequate service can be provided.
\end{enumerate}

This paper is organized as follows. In Section II, the system model and assumptions are introduced. We then apply an orthogonal expansion to $\kappa$-$\mu$ shadowed PDF in Section III, characterize the interference distribution in Section IV, and introduced a novel analytical framework in Section V. Following this, in Section VI, we present numerical and simulation results to validate the analysis. Finally, Section VII concludes this paper.

\section{System Model}

\subsection{Network Model}

We consider the downlink of a $K$-tier HetNet where randomly distributed small-cell BSs, such as pico or femtocell BSs, are overlaid on a network of macrocell BSs. The BSs of each tier may differ in terms of transmit power and spatial density. The locations of the $k$-th tier BSs are modeled by an independent, homogeneous PPP $\Phi_k$ with density $\lambda_k$ and the union of $K$ point processes constitutes the $K$-tier HetNet $\displaystyle\Phi = \mathop{\cup}_{k \in \mathcal{K}} \Phi_k$ where $\mathcal{K} = \{1,2, \ldots, K\}$. The locations of the UEs are modeled by a homogeneous PPP $\Phi^{(u)}$ with density $\lambda^{(u)}$ that is independent of $\Phi$. Orthogonal multiple access is employed at each cell by allocating mutually orthogonal resource blocks to each UE, implying no intra-cell interference within a cell. Without loss of generality, we assume that a typical UE is located at the origin and each BS has an infinitely backlogged queue. The received power at a typical UE from a $k$-th tier BS $x_k \in \Phi_k$\footnote{$x_k$ denotes both the node and the coordinates of the BS.} is given by 
\begin{equation}
  \begin{split}
    P_{x_k} = P_k H_{x_k} \left( \tau \|x_k\|^{-\alpha}\right) 
    = P_k h_{x_k} \chi_{x_k} \left( \tau \|x_k\|^{-\alpha}\right),
  \end{split}
  \label{eq_cc_001}
\end{equation}
where a multiplicative channel model $H_{x_k} = h _{x_k} \chi_{x_k}$ with large-scale shadowing $\chi$ and small-scale fading $h$ is utilized in the second equality, $P_k$ is the transmit power of the $k$-th tier BS, $\alpha$ is the path-loss exponent, and $\tau$ is the path-loss intercept at a link-length $\|x\| = 1$.

\subsection{Cell Association Policy}

We assume a general cell association model where all BSs allow open access and each UE connects to the BS that provides the highest long-term received power (LRP)\footnote{The interested reader is referred to \cite{Zhang2014,Jo2012,Dhillon2014} for a detailed description on the long-term association scheme.} without small-scale fading as written below 
 \begin{equation}
   \begin{split}
     &\text{Typical UE associates to a $k$-th tier BS }x_k^{\ast} \in \Phi_k\\
  \leftrightarrow ~ &x_k^{\ast} = 
  \mathop{\mathrm{arg ~max}}_{j \in \mathcal{K},~x \in \Phi_j} P_j \chi_j \|x\|^{-\alpha} = 
  \mathop{\mathrm{arg ~max}}_{j \in \mathcal{K},~y \in \Phi^{(e)}_j}  P_j \|y\|^{-\alpha},
   \end{split}
   \label{eq_cc_001-ext2}
 \end{equation}
where a change of variable, \textit{i.e.}, $y = \chi_j^{-\frac{1}{\alpha}} x$, is applied in the last equality. For a single tier network, (\ref{eq_cc_001-ext2}) is equivalent to connecting with the closest BS. 

Due to the displacement theorem \cite[Lemma 1]{Dhillon2014}, the mapping between $x$ and $y$ converts a PPP $\Phi_j = \{x\}$ with density $\lambda_j$ into a new homogeneous PPP $\Phi_j^{(e)} = \{y\}$ with density $\lambda_j^{(e)} = \lambda_j \mathbb{E}\left[ \chi_j^{\delta}\right]$ where $\delta = \frac{2}{\alpha}$. Thereby, the original network model $\displaystyle\Phi$ with large-scale shadowing $\chi$ can be equivalently expressed as the network $\displaystyle\Phi^{(e)} = \mathop{\cup}_{j \in \mathcal{K}} \Phi_j^{(e)}$ without a large-scale shadowing where the effect of large-scale shadowing is now incorporated through an appropriate scaling in the density $\lambda_j \rightarrow \lambda_j^{(e)}$. Given that the serving BS belongs to the $k$-th tier, the SINR at a typical UE can be formulated as follows.
 \begin{equation}
   \begin{split}
     \mathrm{SINR}_k = 
     \frac{P_k \chi_{x_k^{\ast}} h _{x_k^{\ast}}  \|x_k^{\ast}\|^{-\alpha}}
     {N + \sum_{j \in \mathcal{K}} \sum_{x \in \Phi_j\backslash\{ x_k^{\ast}\}}  P_j \chi_{x} h_{x}  \|x\|^{-\alpha}}
     = 
     \frac{h _{y_k^{\ast}}  \|y_k^{\ast}\|^{-\alpha}}
     {\hat{N} + \sum_{j \in \mathcal{K}} \sum_{y \in \Phi_j^{(e)}\backslash\{ y_k^{\ast}\}} \hat{P}_j h_{y}  \|y\|^{-\alpha}},
   \end{split}
   \label{eq_cc_002}
 \end{equation}
where $x_k^{\ast}$ represents the location of the associated $k$-tier BS, $\Phi\backslash\{ x_k^{\ast}\}$ denote the set of interfering BSs, $\hat{P}_j = \frac{P_j}{P_k}$ represents the ratio between the transmit power of the interfering and serving BS and $\hat{N} = \frac{N}{P_k} = \frac{N_0 W}{\tau P_k}$ is determined by the noise power spectral density $N_0$, bandwidth $W$, transmit power of the associated BS $P_k$, and the reference path-loss $\tau$ at a unit distance. We denote $I \triangleq \sum_{j \in \mathcal{K}} \sum_{y \in \Phi_j^{(e)}\backslash\{ y_k^{\ast}\}} \hat{P}_j h_{y}  \|y\|^{-\alpha}$ as the aggregate interference normalized by the transmit power of the serving BS. Since the cell association policy in (\ref{eq_cc_001-ext2}) is independent of the small-scale fading distribution $h$, the probability that a typical UE connects to the $k$-th tier BS, denoted as $\mathcal{P}_k$, and the PDF of the link length $\|y_k^{\ast}\|$ can be evaluated as below 
 \begin{equation}
   \begin{split}
   \mathcal{P}_k &= \frac{\lambda_k \mathbb{E}\left[ \chi_k^{\delta}\right]}
   {\sum_{j \in \mathcal{K}} \lambda_j \mathbb{E}\left[ \chi_j^{\delta}\right] \hat{P}_j^{\delta}},
   \quad
   f_{\|y_k^{\ast}\|}(r) = \frac{2 \pi \lambda_k \mathbb{E}\left[ \chi_k^{\delta}\right]}{\mathcal{P}_k} r 
   \exp\left[ 
   -\sum_{j \in \mathcal{K}} \pi r^2 \lambda_j \mathbb{E}\left[ \chi_j^{\delta}\right] \hat{P}_j^{\delta} \right],
   \end{split}
   \label{eq_cc_001-ext3}
 \end{equation}
where  $\delta = \frac{2}{\alpha}$ and (\ref{eq_cc_001-ext3}) follows directly from \cite[Lemma 1]{Jo2012} and \cite[Lemma 2]{Dhillon2014}.

\subsection{Channel Model}

Due to the wide range of use cases provisioned for 5G communications, conventional cellular channel models which typically only consider a single source of shadowing (e.g. large-scale shadowing) are unlikely to be general enough. In reality, it is probable that cellular applications will encounter multiple independent types of shadowing which may or may not occur concurrently. For example in the downlink scenario, the signal transmitted from the BS to the UE will undergo two key types of shadowing, the first of which is large-scale shadowing, denoted here by $\chi$, which is induced  due to large terrestrial objects \textit{e.g.} buildings or hills, which can cause a random fluctuation in the total signal power. In cellular networks, the BSs are usually positioned in elevated locations and are typically free from surrounding clutter. However, UEs are most often operated at lower levels and the LOS signal path is often obscured by local obstacles including the user's body itself. Therefore we consider a second type of shadowing which affects (i.e. randomly fluctuates) the dominant signal component. In this contribution, this LOS shadowed small-scale fading is denoted as $h$ and is modeled as a $\kappa$-$\mu$ shadowed random variable \cite{Paris2014, 6953146}. Together, these two independent random processes create an extremely versatile channel model, $H = h\chi$, which can incorporate a wide range of shadowing and fading scenarios.

\subsubsection{Large-Scale Shadowing}
While the analysis presented in this paper is valid for any finite distribution of the large-scale shadowing $\chi$, we limit our investigation to the three most commonly used large-scale shadowing distributions, namely the lognormal, gamma, and inverse-Gaussian distributions \cite{Simon2004}. The corresponding PDF and $j$-th moment of each of the considered distributions are summarized below, where $j$ is a positive real number. 

	\begin{enumerate}[(a)]
		\item Lognormal Shadowing
		 \begin{equation}
		   \begin{split}
		\chi &\sim {LN}(\mu_l, \sigma^2_l)~\text{where }
	   \begin{dcases}
		&f_{\chi}(x) = \frac{\epsilon_0}{\sqrt{2\pi} \sigma_l x} \exp\left[ -\frac{\left( 10\log_{10}x - \mu_l\right)^2}{2 \sigma_l^2}\right],\\
		&\mathbb{E}\left[ \chi^j \right] = \exp\left[ \frac{j\mu_l}{\epsilon_0} + \frac{1}{2}\left( \frac{j \sigma_l}{\epsilon_0} \right)^2\right],\\
		&\text{$\mu_l$ and $\sigma_l$ are expressed in decibels,~ $\epsilon_0 = \frac{10}{\ln(10)}$},
	   \end{dcases}
		   \end{split}
		   \label{eq_cc_002-ext1}
		 \end{equation}
		\item Gamma Shadowing
		 \begin{equation}
		   \begin{split}
		\chi &\sim {Gamma}(k_g, \theta_g)~\text{where }
	   \begin{dcases}
		f_{\chi}(x) &= \frac{1}{\Gamma\left( k_g \right) \theta_g^{k_g}} x^{k_g-1} \exp\left( -\frac{x}{\theta_g}\right),\\
		\mathbb{E}\left[ \chi^j \right] &= \frac{\Gamma(j+k_g) \theta_g^j}{\Gamma(k_g)},~
	    \mathbb{E}\left[ \chi \right] = k_g \theta_g,\\
	   \end{dcases}
		   \end{split}
		   \label{eq_cc_002-ext2}
		 \end{equation}	
		\item Inverse Gaussian Shadowing
		 \begin{equation}
		   \begin{split}
		\chi &\sim {IG}(\mu_{ig}, \lambda_{ig})~\text{where }
	   \begin{dcases}
		f_{\chi}(x) &= \sqrt{\frac{\lambda_{ig}}{2\pi x^3}} \exp\left( -\frac{\lambda_{ig} (x-\mu_{ig})^2}{2 \mu_{ig}^2 x}\right),\\
		\mathbb{E}\left[ \chi^j \right] &= 
		\mathrm{e}^{\frac{\lambda_{ig}}{\mu_{ig}}} \sqrt{\frac{2 \lambda_{ig}}{\pi}}
		\mu_{ig}^{j-\frac{1}{2}} {K}_{\frac{1}{2}-j}\left( \frac{\lambda_{ig}}{\mu_{ig}}\right),   
	   \end{dcases}   
		\end{split}
		   \label{eq_cc_002-ext3}
		 \end{equation}	
	\end{enumerate}
where $K_{n}(z)$ is a modified Bessel function of the second kind. 
The tractability of the gamma and inverse-Gaussian distributions is the reason why they are commonly used in the literature to approximate the lognormal distribution. A one-to-one mapping between the shadowing parameters can be derived by matching the mean and variance of each distribution as illustrated in Fig. \ref{fig:shadowing_parameter}.

\subsubsection{Small-Scale Fading and LOS Shadowing}

The $\kappa$-$\mu$ shadowed distribution is a very flexible model which contains as special cases the majority of the linear fading models proposed in the open literature, including Rayleigh, Rice (Nakagami-$n$), Nakagami-$m$, Hoyt (Nakagami-$q$), One-Sided Gaussian, $\kappa$-$\mu$, $\eta$-$\mu$ and Rician shadowed to name a few \cite{Moreno-Pozas2015} (See Table 1). Because of this generality, the $\kappa$-$\mu$ shadowed fading model can be used to account for small-scale fading which originates due to LOS or non-LOS conditions, multipath clustering with circularly symmetric or elliptical scattering, and power imbalance between the in-phase and quadrature signal components.

The channel coefficient $h$ of a $\kappa$-$\mu$ shadowed fading channel can be expressed in terms of the in-phase and quadrature components of the fading signal as follows
 \begin{equation}
   \begin{split}
		h = \sum_{i=1}^{\mu}\left[ \left( X_i + \xi p_i \right)^2 + \left( Y_i + \xi q_i \right)^2\right],
   \end{split}
   \label{eq_cc_003}
 \end{equation}
 where $\mu$ is the number of the multipath clusters\footnote{Note that $\mu$ is initially assumed to be a natural number, however this restriction is relaxed to allow $\mu$ to assume any positive real value.}, $X_i$ and $Y_i$ are mutually independent Gaussian random variables with
 \begin{equation}
   \begin{split}
		\mathbb{E}\left[ X_i \right] = \mathbb{E}\left[ Y_i \right] = 0, \quad
		\mathbb{E}\left[ X_i^2 \right] = \mathbb{E}\left[ Y_i^2 \right] = \sigma^2, \quad
		d^2 = \sum_{i = 1}^{\mu}\left( p_i^2 + q_i^2\right),
   \end{split}
   \label{eq_cc_004}
 \end{equation}
 and $\xi$ is a Nakagami-$m$ distributed shadowing perturbation with $\mathbb{E}\left[\xi^2\right] = 1$.

In the following, we summarize the statistics of the $\kappa$-$\mu$ shadowed fading model which will be used in the network performance analysis conducted here. The PDF, $j$-th moment, and Laplace transform of $h$ for the $\kappa$-$\mu$ shadowed channel are respectively given by \cite{Paris2014}
 \begin{equation}
   \begin{split}
    f_h(x) &= \frac{\theta_1^{m-\mu} x^{\mu-1}}{\theta_2^m \Gamma(\mu)} \exp\left( -\frac{x}{\theta_1}\right)
    \Hypergeometric{1}{1}{m}{\mu}{\frac{\theta_2 - \theta_1}{\theta_1 \theta_2}x}, \\
    \mathbb{E}\left[ h^j \right] &= \frac{\theta_1^{m-\mu}  \Gamma\left( \mu + j \right)}{\theta_2^{m - \mu- j} \Gamma\left( \mu \right)}
    \Hypergeometric{2}{1}{\mu - m, \mu + j}{\mu}{-\frac{\mu \kappa}{m}}, \\
    \mathcal{L}_h(s) &= \mathbb{E}\left[ \exp\left( -s h\right) \right] =
    \left( 1 + \theta_1 s\right)^{m-\mu} \left( 1 + \theta_2 s\right)^{-m},
   \end{split}
   \label{eq_cc_005}
 \end{equation}
where $\bar{h} = \mathbb{E}[h]$, $\theta_1 = \frac{\bar{h}}{\mu (1 + \kappa)}$, $\theta_2 = \frac{(\mu \kappa + m) \bar{h}}{\mu (1 + \kappa) m}$, $\kappa$, $\mu$, $m$ and $j$ are positive real-valued constants, $\Gamma(t)$ is the gamma function defined in (\ref{eq_app_I-003}), and $\Hypergeometric{1}{1}{a}{b}{x}$ is the confluent hypergeometric function.
The $j$-th moment of a $\kappa$-$\mu$ shadowed distributed random variable is derived as below
 \begin{equation}
   \begin{split}
  \mathbb{E}\left[ h^j \right] 
  &= \frac{\theta_1^{m-\mu} }{\theta_2^m \Gamma(\mu)} \left( \frac{\theta_1 \theta_2}{\theta_2 - \theta_1} \right)^{\mu + j} \int_{0}^{\infty} t^{\mu+j-1} \mathrm{e}^{ -\frac{m t}{\mu \kappa}}
  \Hypergeometric{1}{1}{\mu - m}{\mu}{-t} \mathrm{d}t\\
  &= \frac{\theta_1^{m-\mu}  \Gamma\left( \mu + j \right)}{\theta_2^{m - \mu- j} \Gamma\left( \mu \right)}
    \Hypergeometric{2}{1}{\mu - m, \mu + j}{\mu}{-\frac{\mu \kappa}{m}},
   \end{split}
   \label{eq_cc_006}
 \end{equation}
where $j$ is a positive real number, a change of variables, \textit{i.e.}, $\frac{\theta_2 - \theta_1}{\theta_1 \theta_2}x \rightarrow t$, with (\ref{eq_app_I-002-a-ext1}) are used in the first equality, then (\ref{eq_app_I-001-b}) is applied in the last equality.

Physically, $\kappa = \frac{d^2}{2\mu \sigma^2}$ represents the ratio between the total power of the dominant components and the total power of the scattered waves, $\mu$ denotes the real-valued extension of the number of multipath clusters, and $m$ indicates the amount of shadowed perturbation in the dominant component as illustrated in Fig. \ref{fig:channel_parameter}. Since the Laplace transform of the Nakagami-\textit{m} distribution converges to $\lim_{m \rightarrow \infty} \mathcal{L}_h(s) = \lim_{m \rightarrow \infty} (1 + s \bar{h}/m)^{-m} = \mathrm{e}^{-s \bar{h}}$, the dominant component becomes increasingly deterministic as $m \rightarrow \infty$. Hence, a $\kappa$-$\mu$ shadowed fading channel where $m \rightarrow \infty$ has a constant dominant power and is therefore equivalent to a $\kappa$-$\mu$ faded channel.

\subsubsection{Combined Large-Scale Shadowing, Small-Scale Fading and LOS Shadowing}

Since the $\kappa$-$\mu$ shadowed fading model includes small-scale fading and LOS shadowed fading as special cases, the proposed channel model $H = h \chi$ can be used to represent four different classes of fading environment as illustrated in Fig. \ref{fig:channel_parameter2}; namely 1) small-scale fading only if $\chi$ is constant, 2) small-scale fading with LOS shadowed fading only if $h$ is either Rician shadowed or $\kappa$-$\mu$ shadowed and $\chi$ is constant, 3) traditional composite fading/shadowing model if $h$ is the result of small-scale fading only with randomly distributed $\chi$, and 4) double shadowed fading conditions if $h$ is the result of small-scale and LOS shadowed fading and $\chi$ is a random variable.

\section{Laguerre Polynomial Series Expansion of the $\kappa$-$\mu$ Shadowed Distribution}

As we can see from (\ref{eq_cc_005}), the $\kappa$-$\mu$ shadowed distribution includes the hypergeometric function which often leads to computationally complex performance evaluation. Due to mathematical intractability, limited work has been conducted which considers $\kappa$-$\mu$ shadowed fading in the context of stochastic geometry. Most notably, in \cite{Kumar2015}, the author approximated a $\kappa$-$\mu$ shadowed random variable using a gamma distributed random variable based on second-order moment matching, but the accuracy of this approximation can not be guaranteed for all fading parameters. In \cite{Parthasarathy2016}, the authors analyzed a cellular network over $\kappa$-$\mu$ shadowed fading where they represented the confluent hypergeometric function by its truncated series form, \textit{i.e.}, $\Hypergeometric{1}{1}{a}{b}{x} \simeq \sum_{n=0}^{N}\frac{\Gamma(a+n)\Gamma(b) x^n}{\Gamma(a)\Gamma(b+n) n!}$. Although the series representation converges locally, it is valid only for integer-valued parameters $a$ and $b$, the radius of convergence diverges over different combinations of parameters, and is computationally complex to evaluate. As illustrated in \figref{fig:pdf_compare}, there are noticeable discrepancies between the approximation methods proposed in \cite{Kumar2015} and \cite{Parthasarathy2016} and the exact PDF for several cases, limiting their application\footnote{The approximation accuracy of \cite{Parthasarathy2016} depends on $N$. For a larger $N$, \cite{Parthasarathy2016} may accurately approximate the exact PDF. In contrast, the proposed approach in (\ref{eq_cc_010}) converges rapidly to the exact PDF  even with a small number of terms $N \leq 50$.}.

To overcome this problem, we adopt the generalized Laguerre polynomial expansion proposed in \cite{Chai2009,Abdi2009} that is analogous to the Fourier series: As a Fourier series can represent any PDF in terms of harmonic bases, we use a generalized Laguerre polynomial as an orthogonal base and simplify the PDF and CDF of the $\kappa$-$\mu$ shadowed fading model as given below.
\begin{lemm}
	The PDF and CDF of the channel coefficient $h$ for the $\kappa$-$\mu$ shadowed fading model can be expressed in series expression form as follows
   \begin{align}
	f_h(x) &= \sum_{n=0}^{\infty} \frac{n! C_n L_n^{\mu-1}(x)}{\Gamma(n+\mu)} x^{\mu-1} \exp\left( -x \right)
	= \sum_{n=0}^{\infty} \sum_{i=0}^{n} c_{i, n} ~x^{\mu+i-1} \exp\left( -x \right),
	\label{eq_cc_010}
	\\
		F_h(x) &= \int_{0}^{x} f_h(t) \mathrm{d}t =
	\sum_{n=0}^{\infty} \sum_{i=0}^{n} b_{i, n} ~x^{\mu+i} \exp\left( -x \right) + \frac{\gamma(\mu, x)}{\Gamma(\mu)},
	\label{eq_cc_011}
   \end{align}
where $\kappa$, $\mu$ and $m$ are positive real-valued parameters, $L_n^{\mu-1}(x)$ is the generalized Laguerre polynomial of degree $n$ and order $\mu-1$ at $x$, $0 \leq x < \infty$, 
$\gamma(\mu, x)$ is the lower incomplete gamma function, the coefficients $C_n$, $c_{i, n}$, and $b_{i, n}$ are calculated as written below
 \begin{equation}
   \begin{split}
   &C_n = \sum_{j = 0}^{n} \frac{(-1)^j}{ j!} \binom{n+\mu-1}{n-j} \mathbb{E}\left[ h^j \right], 
   c_{i, n} = \frac{(-1)^i C_n}{\Gamma(\mu + i)} \binom{n}{i}, 
   b_{i, n} = \frac{(-1)^i C_{n+1}}{\Gamma(\mu + i + 1)} \binom{n}{i},
   \end{split}
	\label{eq_cc_012}
 \end{equation}
and $\mathbb{E}\left[ h^j \right]$ is derived in (\ref{eq_cc_006}).
\end{lemm}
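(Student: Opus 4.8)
The plan is to exploit the fact that the generalized Laguerre polynomials $\{L_n^{\mu-1}(x)\}_{n\geq 0}$ form a complete orthogonal system on $[0,\infty)$ with respect to the weight $w(x) = x^{\mu-1} e^{-x}$, obeying $\int_0^\infty x^{\mu-1} e^{-x} L_m^{\mu-1}(x) L_n^{\mu-1}(x)\,\mathrm{d}x = \frac{\Gamma(n+\mu)}{n!}\,\delta_{mn}$. I would therefore posit the expansion $f_h(x) = x^{\mu-1} e^{-x}\sum_{n=0}^\infty a_n L_n^{\mu-1}(x)$, multiply both sides by $L_k^{\mu-1}(x)$, and integrate over $[0,\infty)$. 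Orthogonality collapses the right-hand side to a single term, yielding $a_n = \frac{n!}{\Gamma(n+\mu)}\int_0^\infty f_h(x) L_n^{\mu-1}(x)\,\mathrm{d}x = \frac{n!}{\Gamma(n+\mu)}\,\mathbb{E}\left[L_n^{\mu-1}(h)\right]$, so that each coefficient is simply a normalised moment functional of the channel gain.

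Next I would make this coefficient explicit. Inserting the closed form $L_n^{\mu-1}(x) = \sum_{j=0}^n (-1)^j \binom{n+\mu-1}{n-j}\frac{x^j}{j!}$ and using linearity of expectation gives $\mathbb{E}\left[L_n^{\mu-1}(h)\right] = \sum_{j=0}^n \frac{(-1)^j}{j!}\binom{n+\mu-1}{n-j}\mathbb{E}\left[h^j\right] = C_n$, which recovers the definition of $C_n$ in (\ref{eq_cc_012}) once the moments from (\ref{eq_cc_006}) are substituted; this establishes the first equality in (\ref{eq_cc_010}). To reach the monomial form, I would substitute the closed form of $L_n^{\mu-1}(x)$ back into the series and collect the power of $x$. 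Simplifying $\frac{n!}{\Gamma(n+\mu)}\cdot\frac{1}{i!}\binom{n+\mu-1}{n-i}$ via the identity $\binom{n+\mu-1}{n-i} = \frac{\Gamma(n+\mu)}{(n-i)!\,\Gamma(\mu+i)}$ reduces the prefactor to $\frac{1}{\Gamma(\mu+i)}\binom{n}{i}$, so the coefficient of $x^{\mu+i-1}e^{-x}$ is exactly $c_{i,n}$, giving the second equality in (\ref{eq_cc_010}).

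For the CDF in (\ref{eq_cc_011}), I would integrate the monomial series term by term using $\int_0^x t^{\mu+i-1} e^{-t}\,\mathrm{d}t = \gamma(\mu+i, x)$, obtaining $F_h(x) = \sum_n \sum_{i=0}^n c_{i,n}\,\gamma(\mu+i, x)$. Applying the recurrence $\gamma(s+1, x) = s\,\gamma(s, x) - x^s e^{-x}$ repeatedly rewrites each $\gamma(\mu+i, x)$ as $\frac{\Gamma(\mu+i)}{\Gamma(\mu)}\gamma(\mu, x)$ minus a finite combination of $x^{\mu+k}e^{-x}$ terms. The $\gamma(\mu, x)$ contributions carry the factor $\sum_{i=0}^n c_{i,n}\Gamma(\mu+i) = C_n\sum_{i=0}^n (-1)^i\binom{n}{i}$, which vanishes for $n\geq 1$ and equals $C_0 = 1$ for $n=0$, thereby isolating the single term $\gamma(\mu, x)/\Gamma(\mu)$. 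The residual monomial terms are then reorganised by swapping the order of summation and invoking the partial-sum identity $\sum_{i=0}^k (-1)^i\binom{n}{i} = (-1)^k\binom{n-1}{k}$; a final index shift $n\mapsto n+1$ produces precisely the coefficients $b_{i,n}$ of (\ref{eq_cc_012}).

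The main obstacle is not the algebra but the analytic justification of the expansion itself: one must verify that $f_h/w$ belongs to the $L^2$ space induced by the weight $w(x) = x^{\mu-1}e^{-x}$ so that the Laguerre series converges in the appropriate sense, and that term-by-term integration is legitimate both for the coefficient extraction and for the passage from (\ref{eq_cc_010}) to (\ref{eq_cc_011}). Because $f_h$ decays exponentially and all its moments $\mathbb{E}[h^j]$ are finite by (\ref{eq_cc_006}), these conditions do hold; but establishing them rigorously, together with the combinatorial bookkeeping needed in the CDF reduction, is where the genuine effort lies.
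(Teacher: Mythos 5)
Your proposal is correct, and its PDF half is essentially the paper's own argument: the paper (Appendix II) likewise posits the Laguerre expansion, extracts the coefficient as $C_n=\int_0^\infty L_n^{\mu-1}(x)f_h(x)\,\mathrm{d}x=\mathbb{E}\bigl[L_n^{\mu-1}(h)\bigr]$, expands the polynomial to write $C_n$ through the moments $\mathbb{E}[h^j]$, and recovers the monomial form via $\binom{n+\mu-1}{n-i}=\frac{\Gamma(n+\mu)}{(n-i)!\,\Gamma(\mu+i)}$. Where you genuinely diverge is the CDF. The paper integrates the Laguerre-form series directly using the anti-derivative identity $t^{\beta}e^{-t}L_n^{\beta}(t)\,\mathrm{d}t=\frac{1}{n}\,\mathrm{d}\bigl[t^{\beta+1}e^{-t}L_{n-1}^{\beta+1}(t)\bigr]$, so each $n\geq 1$ term integrates in closed form to a multiple of $x^{\mu}e^{-x}L_{n-1}^{\mu}(x)$, the $n=0$ term supplies $\gamma(\mu,x)/\Gamma(\mu)$, and re-expanding $L_{n-1}^{\mu}$ with the shift $n\mapsto n+1$ yields $b_{i,n}$ in terms of $C_{n+1}$. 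You instead integrate the monomial series into $\gamma(\mu+i,x)$, reduce each to $\gamma(\mu,x)$ via the recurrence $\gamma(s+1,x)=s\,\gamma(s,x)-x^s e^{-x}$, isolate the single surviving $\gamma(\mu,x)/\Gamma(\mu)$ term through $\sum_{i=0}^n(-1)^i\binom{n}{i}=0$ for $n\geq 1$ (with $C_0=1$), and reassemble the leftover monomials using $\sum_{i=0}^k(-1)^i\binom{n}{i}=(-1)^k\binom{n-1}{k}$ and the same index shift; this bookkeeping does land exactly on $b_{i,n}=\frac{(-1)^i C_{n+1}}{\Gamma(\mu+i+1)}\binom{n}{i}$. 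The paper's route is shorter and explains structurally why the degree/order shift $(n,\mu-1)\to(n-1,\mu)$ appears; yours is more elementary (only the gamma recurrence and binomial sums) and makes transparent why exactly one incomplete-gamma term survives. Both arguments share the same analytic caveats (completeness of the Laguerre system for $f_h/w$ and term-by-term integration), which the paper also leaves implicit by citing the expansion reference, so you are not at a rigor deficit relative to it.
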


\begin{proof}
	See Appendix II.
\end{proof}

\begin{remark}
If $\mu$ and $m$ are positive integers, then by using \cite[Theorem 1]{Lopez-Martinez2016}, the expression in (\ref{eq_cc_010}) can be simplified to a single summation with finite terms as follows.
	 \begin{equation}
	   \begin{split}
	   f_h(x) = 
	   \begin{dcases}
	   \sum_{j=1}^{\mu-m} \frac{A_{1 j} x^{\mu-m-j} \mathrm{e}^{-\frac{x}{\theta_1}}}{\Gamma(\mu-m-j+1) \theta_1^{\mu-m-j+1}}
	    + \sum_{j=1}^{m} \frac{A_{2 j}  x^{m-j} \mathrm{e}^{-\frac{x}{\theta_2}}}{\Gamma(m-j+1) \theta_2^{m-j+1}} &\text{for } m < \mu\\
	   \sum_{j=0}^{m-\mu} \frac{B_{j}  x^{m-j-1} \mathrm{e}^{-\frac{x}{\theta_2}}}{\Gamma(m-j) \theta_2^{m-j}}  &\text{for } m \geq \mu
	   \end{dcases},
	   \end{split}
		\label{eq_cc_011_ext1}
	\end{equation}
	where $A_{1 j}$, $A_{2 j}$, $B_{j}$ are given in \cite[eq (6)]{Lopez-Martinez2016}. (\ref{eq_cc_010}) and (\ref{eq_cc_011_ext1}) imply that $\kappa$-$\mu$ shadowed fading is the result of a linear combination of Gamma distributed random variables, which follows a gamma mixture distribution. To represent the $\kappa$-$\mu$ shadowed fading as a gamma mixture model, double summation with  infinite terms are required for real valued $\mu$ and $m$, whereas for integer valued $\mu$ and $m$, only a single summation with finite terms are necessary.
\end{remark}

\section{Distribution of the Aggregate Interference}

In this section, we calculate the Laplace transform of the aggregate interference for the $\kappa$-$\mu$ shadowed fading channel and characterize the distribution of the interference. The Laplace transform of the aggregate interference is a crucial measure for evaluating the network performance in stochastic geometry based analysis as will be evident from the discussion in Section V.

\begin{lemm}
	Given that a typical UE is associated to the BS $y_k^{\ast}$ located at $\|y_k^{\ast}\| = r$ (or equivalently expressed as $x_k^{\ast}$ using $x = \chi_j^{\frac{1}{\alpha}} y$), the Laplace transform of the aggregate interference over a multiplicative channel with $\kappa$-$\mu$ shadowed fading and large-scale shadowing is calculated as
	 \begin{equation}
	   \begin{split}
	   	\mathcal{L}_{I}(s) 
	   	&= \mathbb{E}\left[ \exp(-s I) \right] = \mathbb{E}\left[ \exp\left(-s \sum_{j \in \mathcal{K}} 
	   	\sum_{y \in \Phi_j^{(e)}\backslash\{ y_k^{\ast}\}}  \hat{P}_j h_{y} \|y\|^{-\alpha}
	   	\right) \right]\\
	   	&= \prod_{j \in \mathcal{K}} \mathcal{L}_{I_j}(s) = 
	   	\exp\left[ -\sum_{j \in \mathcal{K}} 
	   	\pi r^2 \lambda_j \mathbb{E}\left[ \chi_j^{\delta} \right] \hat{P}_j^{\delta} \mathcal{W}_j(z)
	   	\right],
	   \end{split}
		\label{eq_cc_012_ext1}
	 \end{equation}
	 where $\hat{P}_j = \frac{P_j}{P_k}$ is the ratio between the transmit power of the interfering and serving BS, 
	 \begin{equation}
	   \begin{split}
	   \mathcal{W}_j(z) = \frac{\mu ~\theta_1 z}{1-\delta} \left( \frac{\theta_1}{\theta_2} \right)^m \frac{F_2\left( \mu+1; m, 1; \mu, 2-\delta; A, B \right)}{\left( 1+\theta_1 z \right)^{\mu+1}} - \left[ 1 - \frac{\left( 1+\theta_1 z \right)^{m-\mu}}{\left( 1+\theta_2 z\right)^{m}} \right],
	   \end{split}
		\label{eq_cc_012_ext2}
	 \end{equation}
	 for $z = s r^{-\alpha}$, $A = \frac{1-\theta_1/\theta_2}{1+\theta_1 z}$, $B = \frac{\theta_1 z}{1+\theta_1 z}$, $\theta_1 = \frac{\bar{h}}{\mu (1 + \kappa)}$, $\theta_2 = \frac{(\mu \kappa + m) \bar{h}}{\mu (1 + \kappa) m}$ and $\mathbb{E}\left[ \chi_j^{\delta} \right]$ is given by (\ref{eq_cc_002-ext1})-(\ref{eq_cc_002-ext3}). $F_2\left( \Bigcdot \right)$ is the Appell Hypergeometric function which is defined in (\ref{eq_app_I-002-a-ext5}), Appendix I \cite{Saad2003}. The subindex $j$ in $\mathcal{W}_j(z)$ indicates different fading characteristics ($\kappa, \mu, m$) (or $\theta_1, \theta_2$) over each tier. 
\end{lemm}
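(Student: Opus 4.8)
The plan is to compute $\mathcal{L}_I(s)$ directly from the probability generating functional (PGFL) of the PPP, exploiting the fact that the closed-form Laplace transform $\mathcal{L}_h(s)=(1+\theta_1 s)^{m-\mu}(1+\theta_2 s)^{-m}$ of the $\kappa$-$\mu$ shadowed fading is already available in (\ref{eq_cc_005}); consequently the Laguerre expansion of Lemma~1 is not needed for the interference. First I would use the independence of the per-tier equivalent processes $\Phi_j^{(e)}$ to factorize the expectation as $\mathcal{L}_I(s)=\prod_{j\in\mathcal{K}}\mathcal{L}_{I_j}(s)$. Conditioned on the typical user being served by a tier-$k$ BS at distance $\|y_k^{\ast}\|=r$, the maximum-LRP association in (\ref{eq_cc_001-ext2}) forces every interferer $y\in\Phi_j^{(e)}$ to satisfy $\hat{P}_j\|y\|^{-\alpha}\le r^{-\alpha}$; by Slivnyak's theorem the interfering tier-$j$ nodes therefore form a homogeneous PPP of intensity $\lambda_j^{(e)}=\lambda_j\mathbb{E}[\chi_j^{\delta}]$ on the exterior of the ball of radius $\hat{P}_j^{1/\alpha}r$. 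Applying the PGFL and averaging over the i.i.d. fading gives $\mathcal{L}_{I_j}(s)=\exp\bigl(-2\pi\lambda_j^{(e)}\int_{\hat{P}_j^{1/\alpha}r}^{\infty}[1-\mathcal{L}_h(s\hat{P}_j u^{-\alpha})]\,u\,\mathrm{d}u\bigr)$.

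Second, I would pass to a dimensionless integral. Substituting $y=s\hat{P}_j u^{-\alpha}$ (so the lower endpoint maps to $y=z$ with $z=sr^{-\alpha}$, and using $\delta=\tfrac{2}{\alpha}$, $s^{\delta}=z^{\delta}r^{2}$) collapses the radial integral to $\tfrac{\delta}{2}z^{\delta}r^{2}\hat{P}_j^{\delta}\int_0^z[1-\mathcal{L}_h(y)]y^{-\delta-1}\mathrm{d}y$. Collecting the constants reproduces the prefactor $\pi r^{2}\lambda_j\mathbb{E}[\chi_j^{\delta}]\hat{P}_j^{\delta}$ of (\ref{eq_cc_012_ext1}) and identifies $\mathcal{W}_j(z)=\delta z^{\delta}\int_0^z[1-\mathcal{L}_h(y)]y^{-\delta-1}\mathrm{d}y$, so that the entire claim reduces to evaluating this single integral in closed form.

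Third, I would evaluate it. Integrating by parts with $\mathrm{d}v=y^{-\delta-1}\mathrm{d}y$ peels off an elementary boundary term: the endpoint at $y=z$ yields exactly $-[1-\mathcal{L}_h(z)]=-[1-(1+\theta_1 z)^{m-\mu}(1+\theta_2 z)^{-m}]$, which matches the second bracket of (\ref{eq_cc_012_ext2}), while the endpoint at $0$ vanishes since $1-\mathcal{L}_h(y)=O(y)$ and $0<\delta<1$. The remainder is $-z^{\delta}\int_0^z y^{-\delta}\mathcal{L}_h'(y)\,\mathrm{d}y$, a finite-range integral of $y^{-\delta}$ against the product of shifted powers $(1+\theta_1 y)^{m-\mu-1}(1+\theta_2 y)^{-m-1}$ (with the linear factor coming from $\mathcal{L}_h'$). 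After the normalization $y=zt$ and factoring $(1+\theta_1 z)$ out of each binomial, this is an Euler-type Appell integral; invoking the representation of $F_2$ recorded in (\ref{eq_app_I-002-a-ext5}) produces the arguments $A=\tfrac{1-\theta_1/\theta_2}{1+\theta_1 z}$ and $B=\tfrac{\theta_1 z}{1+\theta_1 z}$, the prefactor $\tfrac{\mu\theta_1 z}{1-\delta}(\theta_1/\theta_2)^{m}(1+\theta_1 z)^{-(\mu+1)}$ (the $\tfrac{1}{1-\delta}$ being the Beta normalization $B(1,1-\delta)=\Gamma(1-\delta)/\Gamma(2-\delta)$ attached to the $y^{-\delta}$ factor), and the parameter string $(\mu+1;m,1;\mu,2-\delta)$.

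I expect the special-function bookkeeping in this last step to be the main obstacle. A product of two distinct shifted-power factors over a single variable most naturally evaluates to a \emph{first} Appell function $F_1$ with arguments $-\theta_1 z,-\theta_2 z$; recasting it as the $F_2$ with the M\"obius-type arguments $A,B$ of (\ref{eq_cc_012_ext2}) — which also secures $|A|,|B|<1$ and hence convergence — requires a Pfaff/Euler-type transformation together with a careful match of all six Appell parameters and of the linear factor from $\mathcal{L}_h'$. By comparison the stochastic-geometry steps (tier factorization, the exclusion region, the PGFL, and the change of variables) are routine.
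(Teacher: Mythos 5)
Your proposal is correct, and it takes a genuinely different route from the paper's Appendix III. The paper never integrates out the fading early: it performs the radial integral for fixed $h$ (via the change of variable $t = s\hat{P}_j h l^{-\alpha}$ and integration by parts), reaching $\mathcal{W}_j(z)=\mathbb{E}_h\left[(hz)^{\delta}\gamma(1-\delta,hz)\right]-\mathbb{E}_h\left[1-\mathrm{e}^{-hz}\right]$, and then averages the incomplete-gamma term against the $\kappa$-$\mu$ shadowed PDF; since both the PDF and $\gamma(1-\delta,\cdot)$ (via (\ref{eq_app_I-002-a-ext4})) carry a confluent hypergeometric factor, the tabulated Laplace-type integral (\ref{eq_app_I-002-a-ext7}) of a product of two such factors yields $F_2$ with the arguments $A,B$ in one stroke. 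You instead integrate out $h$ first through the rational closed form $\mathcal{L}_h$, arrive at the fading-agnostic formula $\mathcal{W}_j(z)=\delta z^{\delta}\int_0^z[1-\mathcal{L}_h(y)]\,y^{-\delta-1}\mathrm{d}y$, and integrate by parts in $y$; your boundary term is exactly the paper's second bracket, and your leftover integral equals the paper's first term because the two routes are Fubini-equivalent: $-z^{\delta}\int_0^z y^{-\delta}\mathcal{L}_h'(y)\,\mathrm{d}y = z^{\delta}\,\mathbb{E}_h\left[h^{\delta}\gamma(1-\delta,hz)\right]$. Your route buys a cleaner separation of the stochastic geometry from the special functions, works verbatim for any fading law with a known Laplace transform, and makes the reductions of Lemma 4 immediate (e.g.\ Rayleigh follows from the Euler integral for ${}_2F_1$); the paper's route buys the $F_2$ form, with its convergence-friendly arguments satisfying $|A|+|B|<1$, without any transformation bookkeeping.

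Regarding the obstacle you flag in the last step: it is real, but it closes, and you do not need an $F_1\to F_2$ transformation. Use instead the Euler-type single-integral representation (provable by expanding $(1-Bu)^{-\mu-1-n}$ and integrating term by term)
\begin{equation*}
F_2\left(\mu+1;m,1;\mu,2-\delta;A,B\right)=(1-\delta)\int_0^1(1-u)^{-\delta}(1-Bu)^{-\mu-1}\,{}_2F_1\!\left(\mu+1,m;\mu;\tfrac{A}{1-Bu}\right)\mathrm{d}u,
\end{equation*}
together with the contiguous degeneration ${}_2F_1(\mu+1,m;\mu;w)=(1-w)^{-m-1}\left[1-w\left(1-\tfrac{m}{\mu}\right)\right]$, valid because the upper parameter exceeds the lower by one. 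Substituting $u=1-t$ and using $1-Bu=\tfrac{1+\theta_1 z t}{1+\theta_1 z}$, $1-\tfrac{A}{1-Bu}=\tfrac{\theta_1}{\theta_2}\tfrac{1+\theta_2 z t}{1+\theta_1 z t}$, and $m\theta_2-(m-\mu)\theta_1=\bar h$, the paper's first term collapses to
\begin{equation*}
\frac{\mu\theta_1 z}{1-\delta}\left(\frac{\theta_1}{\theta_2}\right)^{m}\frac{F_2\left(\mu+1;m,1;\mu,2-\delta;A,B\right)}{(1+\theta_1 z)^{\mu+1}}=z\int_0^1 t^{-\delta}\,(1+\theta_1 zt)^{m-\mu-1}(1+\theta_2 zt)^{-m-1}\left(\bar h+\mu\theta_1\theta_2 zt\right)\mathrm{d}t,
\end{equation*}
which is precisely your remainder $-z^{\delta}\int_0^z y^{-\delta}\mathcal{L}_h'(y)\,\mathrm{d}y$ after $y=zt$, since $-\mathcal{L}_h'(y)=(1+\theta_1 y)^{m-\mu-1}(1+\theta_2 y)^{-m-1}(\bar h+\mu\theta_1\theta_2 y)$. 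So your plan goes through exactly as stated, with this identity replacing the ``Pfaff/Euler plus parameter-matching'' step you anticipated.
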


\begin{proof}
	See Appendix III.
\end{proof}

By using a change of variable, \textit{i.e.}, $s z^{-1} = r^{\alpha}$, the Laplace transform of the interference can be expressed as $\mathcal{L}_{I}(s) = \exp\left[ -\sum_{j \in \mathcal{K}} 
	   	\pi \lambda_j \mathbb{E}\left[ \chi_j^{\delta} \right] \left( \hat{P}_j z^{-1}\right)^{\delta}	\mathcal{W}_j(z) s^{\delta}
	   	\right]$ which indicates that the aggregate interference is distributed by a Stable distribution as described below. Note that the exclusion zone in the interference field is considered in (\ref{eq_cc_012_ext1}) based on the condition $\|y_k^{\ast}\| = r$. 

\begin{lemm}
	The aggregate interference over a multiplicative channel of $\kappa$-$\mu$ shadowed fading and large-scale shadowing is distributed by a Stable distribution \cite{Haenggi2013} with four parameters; namely, stability $\delta$, skew = $1$, drift = $0$, dispersion = $\sec\left( \frac{\pi}{2} \delta\right)
		\sum_{j \in \mathcal{K}} 
	   	\pi \lambda_j \mathbb{E}\left[ \chi_j^{\delta} \right] \hat{P}_j^{\delta} z^{-\delta}\mathcal{W}_j(z)$ 
	with $\mathcal{W}_j(z)$ defined in (\ref{eq_cc_012_ext2}). The $j$-th order moment of the aggregate interference is given by 
	 \begin{equation}
	   \begin{split}
	   	\mathbb{E}\left[ I^j\right] =\frac{\Gamma\left( 1 - \frac{j}{\delta}\right)}{\Gamma\left( 1 - j\right) \cos\left( \frac{\pi}{2} \delta\right)^{\frac{j}{\delta}}}
	   	\left[ \sum_{j \in \mathcal{K}} 
	   	\pi r^2 \lambda_j \mathbb{E}\left[ \chi_j^{\delta} \right] z_j^{-\delta}
	   	\mathcal{W}(z_j) \right]^{\frac{\delta}{j}},
	   \end{split}
		\label{eq_cc_012_ext3}
	 \end{equation}	
	 for $j < \frac{2}{\alpha}$ and $\delta \neq 1$. Any moment with order above $j > \frac{2}{\alpha}$ is undefined, \textit{i.e.}, becomes infinity. 
\end{lemm}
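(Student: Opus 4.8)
The plan is to read the result off the Laplace transform already obtained in Lemma~2. After the change of variable $s z^{-1}=r^{\alpha}$ recorded just above the statement, $\mathcal{L}_{I}(s)$ has the canonical form $\mathcal{L}_{I}(s)=\exp(-c\,s^{\delta})$, where $c$ collects the per-tier factors $\pi\lambda_{j}\mathbb{E}[\chi_{j}^{\delta}]\hat{P}_{j}^{\delta}z^{-\delta}\mathcal{W}_{j}(z)$ summed over $\mathcal{K}$ and $\delta=2/\alpha\in(0,1)$ for the usual path-loss regime $\alpha>2$. The first step is to observe that a nonnegative random variable whose Laplace transform is the exponential of a negative fractional power $s^{\delta}$ is, by definition, a maximally skewed (one-sided) $\delta$-stable law. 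I would then invoke the standard characterization of the stable family \cite{Haenggi2013}: in that convention a stable variable with stability index $\delta$, unit skew, zero drift and dispersion $\gamma$ has Laplace transform $\exp(-\gamma\cos(\tfrac{\pi}{2}\delta)\,s^{\delta})$. Matching this against $c\,s^{\delta}$ forces $\gamma=\sec(\tfrac{\pi}{2}\delta)\,c$, which is precisely the dispersion asserted in the statement; the skew is pinned to $1$ because $I\ge 0$, and the drift vanishes because the exponent carries no additive constant. This disposes of the distributional claim.

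For the moments I would use the fractional-moment identity, valid for $0<j<1$ (and hence automatically applicable since $j<\delta<1$),
\begin{equation*}
\mathbb{E}\!\left[I^{j}\right]=\frac{1}{\Gamma(-j)}\int_{0}^{\infty}t^{-j-1}\bigl(\mathcal{L}_{I}(t)-1\bigr)\,\mathrm{d}t,
\end{equation*}
substitute $\mathcal{L}_{I}(t)=\exp(-c\,t^{\delta})$, and evaluate the integral with the change of variable $u=c\,t^{\delta}$. This collapses it to $\tfrac{1}{\delta}c^{j/\delta}\Gamma(-j/\delta)$, after which the recursions $\Gamma(-j)=\Gamma(1-j)/(-j)$ and $\Gamma(-j/\delta)=\Gamma(1-j/\delta)/(-j/\delta)$ reduce the prefactor to $\Gamma(1-j/\delta)/\Gamma(1-j)$ times $c^{j/\delta}$. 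Re-expressing $c$ through the dispersion $\gamma$ introduces the $\cos(\tfrac{\pi}{2}\delta)$ factor and yields exactly~(\ref{eq_cc_012_ext3}). The convergence window appears of its own accord: the reduced integral $\int_{0}^{\infty}u^{-j/\delta-1}(e^{-u}-1)\,\mathrm{d}u=\Gamma(-j/\delta)$ is finite only while $0<j/\delta<1$, i.e.\ $j<\delta=2/\alpha$, and $\Gamma(1-j/\delta)$ develops poles once $j\ge\delta$, so every higher moment is infinite --- precisely the stated dichotomy.

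The principal obstacle I expect is bookkeeping rather than analysis, and it has two facets. First, stable laws admit several mutually incompatible parameterizations, so I must track which convention of \cite{Haenggi2013} is in force in order to place the factor $\sec(\tfrac{\pi}{2}\delta)$ on the correct side and carry the correct exponent $j/\delta$ into the moment; a stray sign or reciprocal exponent here is the easiest way to go astray. Second, the coefficient $c$ retains a residual dependence on $z=s r^{-\alpha}$ through $\mathcal{W}_{j}(z)$, so the identification with a genuine stable law must be made with the conditioning $\|y_{k}^{\ast}\|=r$ and its associated exclusion zone held explicit; it is exactly this feature that generates the per-tier arguments $z_{j}$ and $\mathcal{W}(z_{j})$ in~(\ref{eq_cc_012_ext3}), and I would state the conditioning carefully so that the dispersion is unambiguous.
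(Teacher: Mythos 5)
Your overall route is exactly the paper's: the paper offers no separate proof of this lemma, justifying it only by the sentence preceding it (after the substitution $s z^{-1}=r^{\alpha}$ the Laplace transform reads $\exp(-c\,s^{\delta})$, the signature of a one-sided stable law), with the parameters and moments imported from the stable-law literature it cites. Your fractional-moment computation is correct and in fact more self-contained than anything in the paper: the identity $\mathbb{E}[I^{j}]=\frac{1}{\Gamma(-j)}\int_{0}^{\infty}t^{-j-1}\bigl(\mathcal{L}_{I}(t)-1\bigr)\,\mathrm{d}t$, the collapse to $\frac{1}{\delta}c^{j/\delta}\Gamma(-j/\delta)$, the conclusion $\mathbb{E}[I^{j}]=\frac{\Gamma(1-j/\delta)}{\Gamma(1-j)}c^{j/\delta}$ for $j<\delta$, and the divergence for $j\geq\delta$ are all right; you also silently, and correctly, read the paper's bracket exponent $\delta/j$ as $j/\delta$.

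The gap is in the convention bookkeeping that you yourself flagged as the main risk, and it is not merely cosmetic: your two matching steps contradict each other. To obtain dispersion $=\sec\left(\frac{\pi}{2}\delta\right)c$ you posit the convention $\mathcal{L}(s)=\exp\left(-\gamma\cos\left(\frac{\pi}{2}\delta\right)s^{\delta}\right)$; but then $c=\gamma\cos\left(\frac{\pi}{2}\delta\right)$, so $c^{j/\delta}=\gamma^{j/\delta}\cos\left(\frac{\pi}{2}\delta\right)^{j/\delta}$ places the cosine in the \emph{numerator}, whereas (\ref{eq_cc_012_ext3}) has $\cos\left(\frac{\pi}{2}\delta\right)^{j/\delta}$ in the denominator, so your re-expression cannot "yield exactly" that display. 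Under the convention actually used in \cite{Haenggi2013} (and in Samorodnitsky--Taqqu), a one-sided stable law with dispersion $\gamma$ has $\mathcal{L}(s)=\exp\left(-\gamma\sec\left(\frac{\pi}{2}\delta\right)s^{\delta}\right)$; then the dispersion equals $\cos\left(\frac{\pi}{2}\delta\right)c$ --- the reciprocal of what the lemma states --- while the moment formula $\frac{\Gamma(1-j/\delta)}{\Gamma(1-j)}\left(\gamma\sec\left(\frac{\pi}{2}\delta\right)\right)^{j/\delta}$ does reproduce the shape of (\ref{eq_cc_012_ext3}) with the cosine downstairs. In short, the lemma's dispersion claim and its moment formula cannot both hold under any single parameterization (the paper is internally inconsistent on this point), so a proof cannot validate both by switching conventions midstream, which is what your write-up implicitly does. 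The honest conclusion your derivation supports is $\mathbb{E}[I^{j}]=\frac{\Gamma(1-j/\delta)}{\Gamma(1-j)}c^{j/\delta}$, which matches (\ref{eq_cc_012_ext3}) precisely when the bracket there is read as the standard-convention dispersion $\gamma=\cos\left(\frac{\pi}{2}\delta\right)c$ and the exponent as $j/\delta$, and which contradicts the stated dispersion; the proof should say so explicitly rather than claim agreement with both.
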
	   	

Using (\ref{eq_app_I-002-a-ext6}), the Appell's function reduces to a Gauss hypergeometric function if one of the parameters is zero. Hence the Laplace transform in (\ref{eq_cc_012_ext1}) and (\ref{eq_cc_012_ext2}) can be simplified as below. 

\begin{lemm}
	For the following fading distributions, $\mathcal{W}_j(z)$ in (\ref{eq_cc_012_ext2}) can be simplified as follows 
   \begin{align}
   	   &\text{Rayleigh}:~ &\frac{\bar{h} \delta z}{1-\delta} ~
        \Hypergeometric{2}{1}{1, 1-\delta}{2-\delta}{-\bar{h}z}
        \label{eq_cc_012_ext4}
        \\
       &\text{Nakagami-$m$}:~ &\frac{\bar{h}z}{1-\delta} ~
        \Hypergeometric{2}{1}{m+1, 1-\delta}{2-\delta}{-\frac{\bar{h}z}{m}} - \left[ 1 - {\left( 1 + \frac{\bar{h}z}{m} \right)^{-m}}\right]
        \label{eq_cc_012_ext5}
        \\
       &\text{One-Sided Gaussian}:~ &\frac{\bar{h}z}{1-\delta} ~
        \Hypergeometric{2}{1}{1.5, 1-\delta}{2-\delta}{-2\bar{h}z} - \left[ 1 - \frac{1}{\sqrt{
        1 + 2 \bar{h} z}} \right]
        \label{eq_cc_012_ext6}
        \\
       &\text{$\kappa$-$\mu$ fading}:~ &\frac{\mu \theta_1 z}{(1-\delta) \mathrm{e}^{\mu \kappa}} 
        \Hypergeometric{2}{1}{\mu+1, 1-\delta}{2-\delta}{-\theta_1 z} 
        - \left[ 1 - \frac{\mathrm{e}^{-\frac{\mu \kappa}{1 + (\theta_1 z)^{-1}}}  }
        {(1 + \theta_1 z)^{\mu}}
        \right] 
        \label{eq_cc_012_ext7}
        \\
        &\text{Rician}:~ &\frac{\theta_1 z}{(1-\delta) \mathrm{e}^{K}} 
        \Hypergeometric{2}{1}{2, 1-\delta}{2-\delta}{-\theta_1 z} 
        - \left[ 1 - \frac{\mathrm{e}^{-\frac{K}{1 + (\theta_1 z)^{-1}}}  }
        {1 + \theta_1 z}
        \right]
        \label{eq_cc_012_ext8}
   \end{align}
\end{lemm}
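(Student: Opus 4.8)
The plan is to obtain all five expressions by specialising the triple $(\kappa,\mu,m)$ in the general weight $\mathcal{W}_j(z)$ of (\ref{eq_cc_012_ext2}) according to the standard parameter map of the $\kappa$-$\mu$ shadowed family, and then collapsing the Appell $F_2$ into a Gauss hypergeometric function via (\ref{eq_app_I-002-a-ext6}). The key structural observation is that the bracketed term in (\ref{eq_cc_012_ext2}) is nothing but $1-\mathcal{L}_h(z)$, with $\mathcal{L}_h$ the $\kappa$-$\mu$ shadowed Laplace transform in (\ref{eq_cc_005}); hence in every case that term is settled by simply writing down the relevant special-case Laplace transform, and the actual work is the reduction of the $F_2$-term. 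I would organise the five distributions into the two non-LOS families ($\kappa=0$: Rayleigh, Nakagami-$m$, One-Sided Gaussian) and the two LOS families ($m\to\infty$: $\kappa$-$\mu$ fading, Rician).

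First I would dispatch the $\kappa=0$ cases. Setting $\kappa=0$ gives $\theta_2=\theta_1$ exactly, so the first Appell argument $A=\frac{1-\theta_1/\theta_2}{1+\theta_1 z}$ vanishes and $(\theta_1/\theta_2)^m=1$. By (\ref{eq_app_I-002-a-ext6}) a zero argument collapses $F_2(\mu+1;m,1;\mu,2-\delta;0,B)$ to $\Hypergeometric{2}{1}{\mu+1,1}{2-\delta}{B}$ with $B=\frac{\theta_1 z}{1+\theta_1 z}$, while the bracket reduces to $1-(1+\theta_1 z)^{-\mu}$. I would then invoke the Pfaff transformation in the form $\Hypergeometric{2}{1}{\mu+1,1-\delta}{2-\delta}{-\theta_1 z}=(1+\theta_1 z)^{-(\mu+1)}\Hypergeometric{2}{1}{\mu+1,1}{2-\delta}{B}$, which turns the $F_2$-term into $\frac{\mu\theta_1 z}{1-\delta}\Hypergeometric{2}{1}{\mu+1,1-\delta}{2-\delta}{-\theta_1 z}$. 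Substituting $\theta_1=\bar h/\mu$ together with $\mu=m$ and $\mu=\tfrac{1}{2}$ recovers (\ref{eq_cc_012_ext5}) and (\ref{eq_cc_012_ext6}) at once. Rayleigh ($\mu=1$) needs one extra step, since (\ref{eq_cc_012_ext4}) carries no bracket: I would absorb the residual $-[1-(1+\bar h z)^{-1}]$ into the hypergeometric term through the Gauss contiguous relation $\Hypergeometric{2}{1}{2,1-\delta}{2-\delta}{-\bar h z}-\delta\,\Hypergeometric{2}{1}{1,1-\delta}{2-\delta}{-\bar h z}=\frac{1-\delta}{1+\bar h z}$, which is verified termwise from the series and yields the single-term form.

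Next I would treat the LOS families, which arise as $m\to\infty$ rather than at a finite parameter value. Here I would work at the level of the Laplace transform: as $m\to\infty$ one has $\mathcal{L}_h(s)\to(1+\theta_1 s)^{-\mu}\exp(-\frac{\mu\kappa\theta_1 s}{1+\theta_1 s})$, so the boundary term becomes $1-(1+\theta_1 z)^{-\mu}\exp(-\mu\kappa/(1+(\theta_1 z)^{-1}))$, matching the bracket of (\ref{eq_cc_012_ext7}) verbatim. For the remaining term I would re-run the interference integral of Lemma 2 with this limiting, single-power-law transform; the disappearance of the second power-law factor degenerates the two-variable Appell function to the single Gauss function $\Hypergeometric{2}{1}{\mu+1,1-\delta}{2-\delta}{-\theta_1 z}$, with the exponential contributing the constant $e^{-\mu\kappa}$ in the prefactor. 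Rician then follows immediately by setting $\mu=1$ and $\kappa=K$.

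The main obstacle is precisely this $m\to\infty$ step, and it is the reason I would recompute the integral rather than specialise (\ref{eq_cc_012_ext2}) directly. One cannot merely let $A\to0$ in (\ref{eq_cc_012_ext2}), because the accompanying Appell parameter $b_1=m$ diverges simultaneously; the products $(m)_p A^p$ tend to the finite limits $(\mu\kappa/(1+\theta_1 z))^p$, so a naive use of the zero-argument reduction leaves a residual (Humbert-type) double series rather than the claimed single Gauss function. The careful route is therefore to pass to the limit inside the integral of Lemma 2 and to verify that the exponential factor of the limiting $\kappa$-$\mu$ transform produces only the scalar $e^{-\mu\kappa}$ and the boundary contribution, leaving the single hypergeometric term. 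Once this confluent limit is handled, the two LOS expressions follow, and the three $\kappa=0$ expressions are routine given the Pfaff transformation and the one Rayleigh contiguous identity.
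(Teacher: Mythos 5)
Your handling of the three $\kappa = 0$ cases is correct and is essentially the paper's own argument: there $\theta_2 = \theta_1$ exactly, so the first Appell argument is identically zero while every other parameter stays finite, (\ref{eq_app_I-002-a-ext6}) and the Pfaff transformation in (\ref{eq_app_I-002-a-ext1}) yield (\ref{eq_app_IV-001-ext1}), and Nakagami-$m$/one-sided Gaussian follow by substitution. Your Rayleigh step is also fine: the identity $\Hypergeometric{2}{1}{2, 1-\delta}{2-\delta}{x} - \delta\,\Hypergeometric{2}{1}{1, 1-\delta}{2-\delta}{x} = \frac{1-\delta}{1-x}$ holds (the $n$-th Taylor coefficient of the left-hand side is $\frac{(1-\delta)\left(n+1-\delta\right)}{n+1-\delta} = 1-\delta$), and it is the same contiguous relation the paper extracts from (\ref{eq_app_I-002-a-ext2}) together with $\Hypergeometric{2}{1}{0, b}{c}{x} = 1$.

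The gap is in the LOS cases, at exactly the step you flagged and then asserted away. Your diagnosis is right: with $\theta_2 = \theta_1\left(1+\frac{\mu\kappa}{m}\right)$ one has $(m)_p A^p \rightarrow C^p$ with $C = \frac{\mu\kappa}{1+\theta_1 z}$, so the zero-argument collapse of $F_2$ is unavailable in the limit --- and, notably, this is precisely what the paper's own Appendix IV does anyway, by invoking (\ref{eq_app_IV-001-ext1}) (an identity proved only for $\theta_1 = \theta_2$) together with (\ref{eq_app_IV-001-ext3}). Your proposed repair, however, cannot succeed: the exponential factor of the limiting transform is $\exp\left(-\frac{\mu\kappa\theta_1 s}{1+\theta_1 s}\right)$, which depends on $s$ and does not contribute a constant $\mathrm{e}^{-\mu\kappa}$. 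Re-running the Lemma 2 integral honestly --- e.g.\ writing the $\kappa$-$\mu$ density as the Poisson--gamma mixture $f_h(x) = \sum_{p\geq 0}\frac{\mathrm{e}^{-\mu\kappa}(\mu\kappa)^p}{p!}\frac{x^{\mu+p-1}\mathrm{e}^{-x/\theta_1}}{\Gamma(\mu+p)\,\theta_1^{\mu+p}}$ and integrating term by term (Tonelli; all terms are positive) --- gives for the first term of $\mathcal{W}_j(z)$
\begin{equation*}
\sum_{p\geq 0}\frac{\mathrm{e}^{-\mu\kappa}(\mu\kappa)^p}{p!}\;\frac{(\mu+p)\,\theta_1 z}{1-\delta}\;\Hypergeometric{2}{1}{\mu+p+1, 1-\delta}{2-\delta}{-\theta_1 z},
\end{equation*}
which is exactly the Humbert-type residual series you predicted, in Poisson-mixture form; the claimed (\ref{eq_cc_012_ext7}) is only its $p=0$ term, and the two coincide only at $\kappa = 0$. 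No hypergeometric identity can close this gap, because the stated LOS formulas are in fact false: by (\ref{eq_app_III-001-ext2}), $\mathcal{W}_j(z) = \mathbb{E}_h\left[(zh)^{\delta}\gamma(1-\delta, zh) - (1-\mathrm{e}^{-zh})\right]$ is strictly positive (the integrand vanishes at $x=0$ and has derivative $\delta x^{\delta-1}\gamma(1-\delta,x) > 0$), yet (\ref{eq_cc_012_ext8}) with $K=1$, $\delta = \frac{1}{2}$, $\theta_1 z = \frac{1}{2}$ evaluates to $\frac{1}{\mathrm{e}}\Hypergeometric{2}{1}{2, 0.5}{1.5}{-0.5} - \left(1 - \frac{\mathrm{e}^{-1/3}}{1.5}\right) \approx 0.283 - 0.522 < 0$. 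So for (\ref{eq_cc_012_ext7})--(\ref{eq_cc_012_ext8}) your plan, carried out correctly, refutes the statement rather than proves it: the correct $m\rightarrow\infty$ limit is the Poisson-mixture expression above, and only the bracketed Laplace-transform term of those formulas survives scrutiny.
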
	

\begin{proof}
	See Appendix IV.
\end{proof}

\begin{remark}
	The Appell's function in (\ref{eq_cc_012_ext2}) can be numerically evaluated by using \texttt{appellf2} function in SymPy package \cite{sympy}. Alternatively, we can use the Gauss-Laguerre Quadrature in (\ref{eq_app_I-002-a-ext9}) to approximate $\mathcal{W}_j(z)$ as follows
\begin{equation}
   \begin{split}
   \mathcal{W}_j(z) &= 
   \frac{\left( s \theta_1 r^{-\alpha} \right)^{\delta}}{\Gamma(\mu)} 
\left( \frac{\theta_1}{\theta_2}\right)^m
\int_{0}^{\infty}
t^{\delta+\mu-1}
\mathrm{e}^{-t}
\Hypergeometric{1}{1}{m}{\mu}{\frac{\mu \kappa}{\mu \kappa+m}t}
\gamma(1-\delta, s \theta_1 r^{-\alpha} t) 
\mathrm{d}t\\
&\quad - \left( 1 - (1 + \theta_1 z)^{m-\mu} (1 + \theta_2 z)^{-m}
\right)\\
		&= \frac{\left( \theta_1 z \right)^{\delta}}{\Gamma(\mu)} 
\left( \frac{\theta_1}{\theta_2}\right)^m
\sum_{n=1}^{N} w_n f(x_n) - \left( 1 - (1 + \theta_1 z)^{m-\mu} (1 + \theta_2 z)^{-m}
\right) + R_N,
   \end{split}
   \label{eq_cc_012_ext9}
 \end{equation}
where $x_n$ and $w_n$ are the $n$-th abscissa and weight of the $N$-th order Laguerre polynomial, $f(x) = x^{\delta+\mu-1} \Hypergeometric{1}{1}{m}{\mu}{\frac{\mu \kappa}{\mu \kappa+m}x}
\gamma(1-\delta,  \theta_1 z x)$, and $R_N$ is the residue term. Since $R_N$ converges rapidly to zero \cite{Gradshteyn1994}, (\ref{eq_cc_012_ext9}) provides a numerically accurate and efficient approximation to $\mathcal{W}_j(z)$. 
\end{remark}

\section{Theoretical Analysis of the Performance Measures}

In this section, we propose a novel method to compute $\mathbb{E}\left[ g\left( \gamma \right)\right]$ for an arbitrary function of the SINR $g(\gamma)$ using stochastic geometry. The original idea was proposed by Hamdi in \cite{Hamdi2007} for Nakagami-\textit{m} fading, and later in \cite{Chun2016} for $\kappa$-$\mu/\eta$-$\mu$ fading, which we further extend it to $\kappa$-$\mu$ shadowed fading in this paper and essentially to every linear fading model available in the open literature. By using the proposed method, one can evaluate any performance measures that are represented as a function of SINR (or SIR). For instance, the spectral efficiency, outage probability, moments of the SINR, and error probability can be expressed as an average of $g(x) = \log(1+x)$, $g(x) = \mathbb{I}(x \leq x_0)$, $g(x) = x^n$, and $g(x) = Q\left( x \right)$, respectively. 

  \begin{thm}
  For the $K$-tier HetNet with $\kappa$-$\mu$ shadowed fading, $\mathbb{E}\left[ g\left( \mathrm{SINR} \right)\right]$ is given by
    \begin{equation}
    \begin{split}
    &\mathbb{E}\left[ g\left( \mathrm{SINR} \right)\right] = \sum_{k = 1}^{K} \mathcal{P}_k \mathbb{E}\left[ g\left( \mathrm{SINR}_k \right)\right],\quad
      \mathbb{E}\left[ g\left( \mathrm{SINR}_k \right)\right] =
      \sum_{n=0}^{\infty} C_n \sum_{i=0}^{n} (-1)^{i} \binom{n}{i} \xi_i,
    \end{split}
    \label{eq_cc_023}
    \end{equation}
  where $\mathcal{P}_k$ is derived in (\ref{eq_cc_001-ext3}), $\mathrm{SINR}_k$ represents the SINR when a typical UE is associated to the $k$-th tier BS $y_k^{\ast}$, $C_n$ is defined in (\ref{eq_cc_012}), and $\xi_i$ represents the following integral
  \begin{equation}
    \begin{split}
    &\xi_i \triangleq \int_{0}^{\infty} g_{\mu+i}(z) ~
    \mathbb{E}_{r}\left[ \mathrm{e}^{-r^{\alpha} \hat{N} z} \mathcal{L}_{I}\left( r^{\alpha} z\right) \right] \mathrm{d}z,\\
    &\mathbb{E}_{r}\left[ \mathrm{e}^{-r^{\alpha} \hat{N} z} \mathcal{L}_{I}\left( r^{\alpha} z\right) \right] =
    \int_{0}^{\infty} \mathrm{e}^{-r^{\alpha} \hat{N} z} \mathcal{L}_{I}\left( r^{\alpha} z\right) f_{\|y_k^{\ast}\|}(r) \mathrm{d}r,
    \end{split}
    \label{eq_cc_024}
    \end{equation}
    the distribution $f_{\|y_k^{\ast}\|}(r)$ is given by (\ref{eq_cc_001-ext3}) and $\mathcal{L}_{I}(s)$ is derived in (\ref{eq_cc_012_ext1}). $g_{\mu+i}(z)$ is defined as
    \begin{equation}
    \begin{split}
      g_{\mu+i}(z) &= \frac{1}{\Gamma(\mu+i)} \frac{\mathrm{d}^{\mu+i}}{\mathrm{d}z^{\mu+i}}
      z^{\mu+i-1} g(z) = 
		\sum_{n=0}^{\mu+i-1}
		\binom{\mu+i}{n}
		\frac{z^{\mu+i-1-n}}{\Gamma(\mu+i-n)}
		\frac{\mathrm{d}^{\mu+i-n}}{\mathrm{d}x^{\mu+i-n}} g(z),
    \end{split}
    \label{eq_cc_025}
    \end{equation}
    where we used the general Leibniz rule in the last equality. 
  \end{thm}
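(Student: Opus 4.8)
The plan is to combine three ingredients: the law of total probability over the serving tier, the Laguerre series expansion of the serving-link fading PDF from (\ref{eq_cc_010}), and a Hamdi-type integral identity that trades an integral against $f_h$ for the Laplace transform of the aggregate interference. First I would establish the outer decomposition $\mathbb{E}[g(\mathrm{SINR})]=\sum_{k=1}^{K}\mathcal{P}_k\,\mathbb{E}[g(\mathrm{SINR}_k)]$ directly from the law of total expectation, where $\mathcal{P}_k$ from (\ref{eq_cc_001-ext3}) is the probability that the typical UE associates with tier $k$ and $\mathrm{SINR}_k$ is the SINR conditioned on that event. It then suffices to evaluate each $\mathbb{E}[g(\mathrm{SINR}_k)]$ separately.

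For a fixed tier $k$, I would condition on the serving link length $r=\|y_k^{\ast}\|$ and on the aggregate interference $I$, writing $\mathrm{SINR}_k=h\,r^{-\alpha}/(\hat{N}+I)$, where $h$ is the $\kappa$-$\mu$ shadowed serving-link fading. Crucially, $h$ is independent of both $r$ and $I$ because the association rule (\ref{eq_cc_001-ext2}) ignores small-scale fading, so the serving-link statistics are the unbiased PDF $f_h$. Averaging over $h$ first and inserting the series (\ref{eq_cc_010}) with coefficients $c_{i,n}$ from (\ref{eq_cc_012}) gives $\mathbb{E}_h[g(\cdot)]=\sum_{n}\sum_{i=0}^{n}c_{i,n}\int_{0}^{\infty}g(x r^{-\alpha}/(\hat{N}+I))\,x^{\mu+i-1}e^{-x}\,dx$, so the whole problem reduces to the single-term integrals $\mathcal{I}_i(r,I)=\int_{0}^{\infty}g(x r^{-\alpha}/(\hat{N}+I))\,x^{\mu+i-1}e^{-x}\,dx$.

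The heart of the argument is a Hamdi-type identity for $\mathcal{I}_i$. Setting $\beta=r^{\alpha}(\hat{N}+I)$ and substituting $x=\beta z$ turns $\mathcal{I}_i$ into $\beta^{\mu+i}\int_{0}^{\infty}z^{\mu+i-1}g(z)e^{-\beta z}\,dz$; recognizing $\beta^{\mu+i}$ as the multiplier produced by the Laplace transform of a $(\mu+i)$-th derivative, I would integrate by parts $\mu+i$ times (equivalently, apply the Laplace-derivative rule) to obtain $\mathcal{I}_i=\Gamma(\mu+i)\int_{0}^{\infty}g_{\mu+i}(z)e^{-\beta z}\,dz$ up to a boundary contribution, with $g_{\mu+i}$ exactly as in (\ref{eq_cc_025}); the Leibniz expansion there follows from differentiating the product $z^{\mu+i-1}g(z)$. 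Taking the conditional expectation over $I$ given $r$, using $\mathbb{E}_{I}[e^{-\beta z}]=e^{-r^{\alpha}\hat{N}z}\,\mathcal{L}_I(r^{\alpha}z)$ with $\mathcal{L}_I$ from (\ref{eq_cc_012_ext1}), and then averaging over $r$ with $f_{\|y_k^{\ast}\|}$ from (\ref{eq_cc_001-ext3}) reproduces exactly the quantity $\xi_i$ in (\ref{eq_cc_024}). Finally, the coefficient identity $c_{i,n}\Gamma(\mu+i)=(-1)^{i}C_n\binom{n}{i}$ repackages $\sum_{n,i}c_{i,n}\Gamma(\mu+i)\,\xi_i$ into the stated double sum $\sum_{n}C_n\sum_{i}(-1)^{i}\binom{n}{i}\xi_i$.

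I expect the main obstacle to be the rigorous control of the boundary terms in the integration-by-parts step. Because $z^{\mu+i-1}g(z)$ vanishes to high order at the origin, only the top-order boundary term survives, contributing $\Gamma(\mu+i)\,g(0)$ to each $\mathcal{I}_i$; summing these residuals against the coefficients and using $\sum_{i=0}^{n}(-1)^{i}\binom{n}{i}$, which equals $1$ for $n=0$ and $0$ for $n\ge1$, together with $C_0=1$, collapses the entire residual to $g(0)$. The identity as stated is therefore exact precisely when $g(0)=0$, which holds for the paper's principal metrics $g(x)=\log(1+x)$ and $g(x)=x^{n}$ with $n\ge1$. A secondary technical point is that for non-integer $\mu$ the operator $\frac{d^{\mu+i}}{dz^{\mu+i}}$ must be read in the fractional (Riemann--Liouville) sense before the Leibniz form in (\ref{eq_cc_025}) is invoked, and one must justify interchanging the infinite Laguerre summation with the $x$-, $r$-, and $I$-integrations, which I would do via Fubini/dominated convergence using the rapid decay of $C_n$ and the integrability of $\mathcal{L}_I$.
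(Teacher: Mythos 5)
Your proposal is correct and follows essentially the same route as the paper's own proof in Appendix V: average over the serving-link fading via the Laguerre series expansion, change variables to $z$, integrate by parts to produce $g_{\mu+i}$, take the expectation over $I$ to obtain the Laplace transform, and average over $r$ to get $\xi_i$, finally repackaging $c_{i,n}\Gamma(\mu+i)=(-1)^i C_n\binom{n}{i}$. In fact, your bookkeeping of the boundary terms is more careful than the paper's, which justifies dropping them by the identity $\sum_{i=0}^{n}(-1)^i\binom{n}{i}=0$ that holds only for $n\geq 1$ and thus silently discards the $n=0$ residual $C_0\,g(0)$; your observation that the result is exact precisely when $g(0)=0$ (as holds for the rate and moment metrics actually used) is a correct refinement of the paper's argument.
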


  \begin{proof}
  See Appendix V.
  \end{proof} 

Theorem $1$ is the most general result in this paper that evaluates arbitrary performance measure for $K$-tier HetNet, considering noise, interference, per-tier BS density, and independent fading and shadowing across each tier. Theorem $1$ can be further simplified for some special cases, such as noise-limited scenario, interference-limited scenario, or identical fading and shadowing parameters on all tiers, which are described in Lemma 5. The analytic function $g(z)$ and the corresponding $g_{\mu+i}(z)$ for various performance measure are summarized in Table \ref{tab_gx}\footnote{The detailed proof of Table \ref{tab_gx} is given in \cite{Hamdi2007,Chun2016}.}.  We also note that $\mathbb{E}\left[ g\left( \mathrm{SINR}_k \right)\right]$ in (\ref{eq_cc_023}) is computationally efficient; the computational complexity of (\ref{eq_cc_023}) is same as a single summation expression since $\xi_i$ is independent of the index $n$.

\begin{lemm}
	$\mathbb{E}_{r}\left[ \mathrm{e}^{-r^{\alpha} \hat{N} z} \mathcal{L}_{I}\left( r^{\alpha} z\right) \right]$ can be evaluated for several path-loss exponent values as  
    \begin{equation}
    \mathbb{E}_{r}\left[ \mathrm{e}^{-r^{\alpha} \hat{N} z} \mathcal{L}_{I}\left( r^{\alpha} z\right) \right] = 
    \begin{dcases}
    \frac{\sum_{j \in \mathcal{K}} \lambda_j \mathbb{E}\left[ \chi_j^{\delta} \right] \hat{P}_j^{\delta} }{\sum_{j \in \mathcal{K}} \lambda_j \mathbb{E}\left[ \chi_j^{\delta} \right] \hat{P}_j^{\delta} \left( 1+\mathcal{W}_j(z) \right) + {\hat{N} z}/{\pi}}
    &\text{ for } \alpha = 2,\\
    \frac{\pi^{\frac{3}{2}} \sum_{j \in \mathcal{K}} \lambda_j \mathbb{E}\left[ \chi_j^{\delta} \right] \hat{P}_j^{\delta}}{2\sqrt{\hat{N} z}}
    \exp\left( \Theta^2 \right) \mathrm{erfc}\left( \Theta \right)
    &\text{ for } \alpha = 4.
	\end{dcases}
    \label{eq_cc_026-ext003}
    \end{equation}
    where $\Theta \triangleq \frac{\pi}{2\sqrt{\hat{N} z}} \sum_{j \in \mathcal{K}} \left( \lambda_j \mathbb{E}\left[ \chi_j^{\delta} \right] \hat{P}_j^{\delta} \left( 1+\mathcal{W}_j(z) \right) \right)$. If all tiers have identical fading characteristics ($\kappa, \mu, m$), then $\mathcal{W}(z) = \mathcal{W}_j(z)$ for any $j \in \mathcal{K}$ and (\ref{eq_cc_026-ext003}) can be further simplified as 
    \begin{equation}
    \begin{dcases}
    \frac{1}{1+\mathcal{W}(z) + \frac{\hat{N} z}{\pi \lambda_0}}
    &\text{ for } \alpha = 2,\\
    \frac{\pi^{\frac{3}{2}} \lambda_0}{2\sqrt{\hat{N} z}}
    \exp\left( \frac{\left(\pi \lambda_0 (1+\mathcal{W}(z))\right)^2}{4{\hat{N} z}} \right) \mathrm{erfc}\left( \frac{\pi \lambda_0 (1+\mathcal{W}(z))}{2\sqrt{\hat{N} z}} \right)
    &\text{ for } \alpha = 4.
	\end{dcases}
    \label{eq_cc_026-ext004}
    \end{equation}
    where we denoted $\lambda_0 \triangleq \sum_{j \in \mathcal{K}} \lambda_j \mathbb{E}\left[ \chi_j^{\delta} \right] \hat{P}_j^{\delta}$.

	For interference-limited scenario, \textit{i.e.}, $I \gg \hat{N}$, $\mathbb{E}_{r}\left[ \mathrm{e}^{-r^{\alpha} \hat{N} z} \mathcal{L}_{I}\left( r^{\alpha} z\right) \right]$ can be simplified as 
    \begin{equation}
    \int_{0}^{\infty} \mathcal{L}_{I}\left( r^{\alpha} z\right) f_{\|y_k^{\ast}\|}(r) \mathrm{d}r
    = \frac{\sum_{j \in \mathcal{K}} \lambda_j \mathbb{E}\left[ \chi_j^{\delta} \right] \hat{P}_j^{\delta} }{\sum_{j \in \mathcal{K}} \lambda_j \mathbb{E}\left[ \chi_j^{\delta} \right] \hat{P}_j^{\delta} \left( 1+\mathcal{W}_j(z) \right)}.
    \label{eq_cc_026-ext005}
    \end{equation}
    If all tiers have identical fading characteristics, then (\ref{eq_cc_026-ext005}) reduces to a succinct form as
    \begin{equation}
    \mathbb{E}_{r}\left[ \mathrm{e}^{-r^{\alpha} \hat{N} z} \mathcal{L}_{I}\left( r^{\alpha} z\right) \right] \overset{\hat{N} \rightarrow 0}{=\joinrel=} \frac{1}{1+\mathcal{W}(z)}. 
    \label{eq_cc_026-ext006}
    \end{equation}

    For noise-limited scenario, \textit{i.e.}, $I \ll \hat{N}$, $\mathbb{E}_{r}\left[ \mathrm{e}^{-r^{\alpha} \hat{N} z} \mathcal{L}_{I}\left( r^{\alpha} z\right) \right]$ can be simplified as 
    \begin{equation}
    	\begin{split}
    \int_{0}^{\infty} \mathrm{e}^{-r^{\alpha} \hat{N} z} f_{\|y_k^{\ast}\|}(r) \mathrm{d}r
     =
     \begin{dcases}
     \frac{1}{1+\frac{\hat{N} z}{\pi \lambda_0}},
     &\alpha = 2,\\
     \frac{\pi^{\frac{3}{2}} \lambda_0}{2\sqrt{\hat{N} z}}
    \exp\left( \frac{\left(\pi \lambda_0\right)^2}{4{\hat{N} z}} \right) \mathrm{erfc}\left( \frac{\pi \lambda_0}{2\sqrt{\hat{N} z}} \right),
    &\alpha = 4.
	 \end{dcases}    	
    	\end{split}
    \label{eq_cc_026-ext007}
    \end{equation}
\end{lemm}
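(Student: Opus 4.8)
The plan is to substitute the closed forms of $\mathcal{L}_{I}(r^{\alpha} z)$ from (\ref{eq_cc_012_ext1}) and the link-length density $f_{\|y_k^{\ast}\|}(r)$ from (\ref{eq_cc_001-ext3}) directly into the defining integral in (\ref{eq_cc_024}), and then to collapse the product into a single Gaussian-type integral in $r$ that can be evaluated explicitly for the two tractable path-loss exponents. The whole lemma is really one integral identity whose branches differ only in how $r^{\alpha}$ behaves.

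First I would write out the integrand. Both $\mathcal{L}_{I}(r^{\alpha} z)$ and $f_{\|y_k^{\ast}\|}(r)$ carry an exponential factor of the form $\exp[-\pi r^2 \sum_{j} \lambda_j \mathbb{E}[\chi_j^{\delta}]\hat{P}_j^{\delta}(\cdots)]$, the Laplace transform contributing the weight $\mathcal{W}_j(z)$ and the density contributing $1$. The key simplification is therefore to merge these two $r^2$-exponentials into a single factor $\exp[-\beta r^2]$ with $\beta \triangleq \pi \sum_{j\in\mathcal{K}} \lambda_j \mathbb{E}[\chi_j^{\delta}]\hat{P}_j^{\delta}\,(1+\mathcal{W}_j(z))$. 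Next I would collapse the leading constant of the density: the definition of $\mathcal{P}_k$ in (\ref{eq_cc_001-ext3}) gives $\lambda_k\mathbb{E}[\chi_k^{\delta}]/\mathcal{P}_k = \sum_{j\in\mathcal{K}}\lambda_j\mathbb{E}[\chi_j^{\delta}]\hat{P}_j^{\delta} = \lambda_0$, so the prefactor becomes exactly $2\pi\lambda_0$. At this point the entire expectation has been reduced to $2\pi\lambda_0\int_0^\infty r\,e^{-\hat{N}z\,r^{\alpha}-\beta r^2}\,\mathrm{d}r$, and all the remaining work is the evaluation of this single one-dimensional integral.

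Then I would evaluate it per exponent. For $\alpha=2$ the $r^{\alpha}$ and $r^2$ terms coalesce, and $\int_0^\infty r\,e^{-(\hat{N}z+\beta)r^2}\,\mathrm{d}r = 1/[2(\hat{N}z+\beta)]$; after dividing numerator and denominator by $\pi$ and inserting $\beta$ this yields the first branch of (\ref{eq_cc_026-ext003}). For $\alpha=4$ I would substitute $u=r^2$, which turns the integral $\int_0^\infty r\,e^{-\hat{N}z\,r^4-\beta r^2}\,\mathrm{d}r$ into $\tfrac{1}{2}\int_0^\infty e^{-\hat{N}z\,u^2-\beta u}\,\mathrm{d}u$, a standard complementary-error-function integral $\int_0^\infty e^{-au^2-bu}\,\mathrm{d}u = \tfrac{1}{2}\sqrt{\pi/a}\,e^{b^2/(4a)}\mathrm{erfc}(b/2\sqrt{a})$; identifying $a=\hat{N}z$, $b=\beta$ and recognizing that $\Theta=\beta/(2\sqrt{\hat{N}z})$ matches the lemma's definition reproduces the second branch with its $\exp(\Theta^2)\mathrm{erfc}(\Theta)$ structure. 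The remaining formulas follow by specializing $\beta$: setting $\mathcal{W}_j(z)=\mathcal{W}(z)$ for all $j$ and factoring out $\lambda_0$ gives the identical-fading expressions (\ref{eq_cc_026-ext004}); the interference-limited case (\ref{eq_cc_026-ext005})--(\ref{eq_cc_026-ext006}) is obtained by letting $\hat{N}\to 0$, which removes the $r^{\alpha}$ term entirely and leaves the $\alpha$-independent ratio $\int_0^\infty r\,e^{-\beta r^2}\,\mathrm{d}r=1/(2\beta)$; and the noise-limited case (\ref{eq_cc_026-ext007}) is obtained by dropping $\mathcal{W}_j(z)$, i.e.\ replacing $\beta$ by $\pi\lambda_0$, in the $\alpha=2$ and $\alpha=4$ evaluations.

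The step I expect to be the main obstacle is the $\alpha=4$ evaluation: unlike $\alpha=2$ it does not reduce to an elementary rational function, and one must recognize the Gaussian integral and apply it with the correct identification of constants to obtain the $\exp(\Theta^2)\mathrm{erfc}(\Theta)$ form. More conceptually, the reason the lemma is stated only for $\alpha\in\{2,4\}$ is that $\int_0^\infty r\,e^{-c\,r^{\alpha}-\beta r^2}\,\mathrm{d}r$ admits a clean closed form precisely for these exponents (a rational function and an $\mathrm{erfc}$, respectively, via $\delta=2/\alpha\in\{1,\tfrac12\}$), so the crux is verifying that these two substitutions exhaust the tractable cases while the remaining $\hat{N}\to 0$ and $\mathcal{W}_j\to 0$ limits are routine specializations of $\beta$.
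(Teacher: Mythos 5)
Your proposal is correct and follows essentially the same route as the paper's own proof in Appendix VI: substitute the link-length density and the interference Laplace transform into the defining integral, merge the $r^2$-exponentials into a single factor with rate $\beta=\pi\sum_{j}\lambda_j\mathbb{E}[\chi_j^{\delta}]\hat{P}_j^{\delta}(1+\mathcal{W}_j(z))$, and evaluate the resulting one-dimensional integral elementarily for $\alpha=2$ and via the Gaussian-times-linear exponent identity $\int_0^\infty e^{-au^2-bu}\,\mathrm{d}u=\tfrac{1}{2}\sqrt{\pi/a}\,e^{b^2/(4a)}\mathrm{erfc}\bigl(b/(2\sqrt{a})\bigr)$ for $\alpha=4$, with the special cases obtained by specializing $\beta$. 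The only cosmetic difference is that you collapse $\lambda_k\mathbb{E}[\chi_k^{\delta}]/\mathcal{P}_k=\lambda_0$ at the outset, whereas the paper carries this prefactor through and substitutes $\mathcal{P}_k$ at the end; the computations are otherwise identical.
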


\begin{proof}
 See Appendix VI.
\end{proof}

\begin{remark}
	If all tiers have identical fading characteristics and are interference-limited only, the performance measure $\mathbb{E}\left[ g\left( \mathrm{SINR} \right)\right]$ can be expressed by using Lemma $5$ as follows  
    \begin{equation}
    	\begin{split}
		\mathbb{E}\left[ g\left( \mathrm{SINR} \right)\right] =
		\sum_{n=0}^{\infty} C_n \sum_{i=0}^{n} (-1)^{i} \binom{n}{i} 
		\int_{0}^{\infty} \frac{g_{\mu+i}(z)}{\mathcal{W}(z)} \mathrm{d}z,
    \end{split}
    \label{eq_cc_027-ext001}
    \end{equation}
  where $C_n$, $g_{\mu+i}(z)$ and $\mathcal{W}(z)$ are independent of the PPP density $\lambda_j$. (\ref{eq_cc_027-ext001}) provides an important insight into the system performance of a PPP-distributed cellular network with $\kappa$-$\mu$ shadowed fading and arbitrary large-scale shadowing. Specifically, any performance measure of PPP-distributed HetNet that can be represented as a function of SINR is independent to the BS transmit power $P_k$, BS density $\lambda_k$, and the number of tiers $K$. This invariance property was originally introduced in \cite{Andrews2011,Jo2012,Dhillon2012} for Rayleigh fading. (\ref{eq_cc_027-ext001}) generalizes this argument by proving that the invariance property holds for any linear small-scale fading and finite large-scale shadowing distribution.
\end{remark}

\begin{remark}
If $\mu$ and $m$ are positive integers, (\ref{eq_cc_011_ext1}) can be utilized to achieve an expression analogous to Theorem 1, in terms of a single summation with finite terms as described below 
	 \begin{equation}
	   \begin{split}
	   \mathbb{E}\left[ g\left( \mathrm{SINR}_k \right)\right] = 
	   \begin{dcases}
	   \sum_{j=1}^{\mu-m} A_{1 j} \zeta_{\mu-m-j+1}(\theta_1) + 
	   \sum_{j=1}^{m} A_{2 j}  \zeta_{m-j+1}(\theta_2) &\text{for } m < \mu\\
	   \sum_{j=0}^{m-\mu} B_{j}  \zeta_{m-j}(\theta_2) &\text{for } m \geq \mu
	   \end{dcases},
	   \end{split}
		\label{eq_cc_028}
	\end{equation}
where $\zeta_j(\theta) = \int_{0}^{\infty} g_{j}(z) ~
    \mathbb{E}_{r}\left[ \mathrm{e}^{-\frac{r^{\alpha}}{\theta} \hat{N} z} \mathcal{L}_{I}\left( \frac{r^{\alpha}}{\theta} z\right) \right] \mathrm{d}z$
and the coefficients $A_{1 j}$, $A_{2 j}$, $B_{j}$ are derived in \cite[eq (6)]{Lopez-Martinez2016}. The proof of (\ref{eq_cc_028}) is omitted since it is analogous to Theorem 1. 
\end{remark}

In the following, we apply Theorem $1$ and Lemma $5$ to evaluate various performance measures.

\subsection{Spectral Efficiency}

	The spectral efficiency and the average user throughput are defined as \cite{Jo2012}
    \begin{equation}
    \begin{split}
    \mathcal{R} &= \sum_{k = 1}^{K} \mathcal{P}_k \mathbb{E}\left[ \ln\left( 1 + \mathrm{SINR}_k \right)\right], \quad 
    \bar{\mathcal{R}}_k = \frac{\mathbb{E}\left[ \ln\left( 1 + \mathrm{SINR}_k \right)\right]}{\mathcal{N}_k},
    \end{split}
    \label{eq_cc_027-ext002}
    \end{equation}
    where $\mathcal{P}_k$ is the tier association probability to the $k$-th tier evaluated by (\ref{eq_cc_002}), $\mathrm{SINR}_k$ is the received SINR from the $k$-th tier BS, and $\mathcal{N}_k$ represents the number of UEs served by the BS $x_k^{\ast}$\footnote{An accurate approximation to model the distribution of $\mathcal{N}_k$ is proposed in \cite{Singh2013}.}. The efficiency measures in (\ref{eq_cc_027-ext002}) require $\mathbb{E}\left[ \ln\left( 1 + \mathrm{SINR}_k \right)\right]$ which can be evaluated by using Theorem $1$ with $g(z) = \ln(1+z)$ and $g_{\mu+i}(z)$ as follows \cite{Hamdi2007} 
    \begin{equation}
    \begin{split}
      g_{\mu+i}(z) &= \frac{1}{\Gamma(\mu+i)} \frac{\mathrm{d}^{\mu+i}}{\mathrm{d}z^{\mu+i}}
      z^{\mu+i-1} g(z) = \frac{1}{z} \left( 1 - \frac{1}{(1+z)^{\mu+i}}\right).
    \end{split}
    \label{eq_cc_027}
    \end{equation}

    Given identical channel characteristics across each tier, the spectral efficiency reduces to
    \begin{equation}
    \begin{split}
    \mathcal{R} = \mathbb{E}\left[ \ln\left( 1 + \mathrm{SINR}_k \right)\right]
    = \sum_{n=0}^{\infty} C_n \sum_{i=0}^{n} (-1)^{i} \binom{n}{i} 
		\int_{0}^{\infty} \frac{\mathcal{K}(z)}{z}\left( 1 - \frac{1}{(1+z)^{\mu+i}}\right) \mathrm{d}z,    
	\end{split}
    \label{eq_cc_027-ext004}
    \end{equation}
    where $C_n$ is derived in (\ref{eq_cc_010}), $\mathcal{W}(z)$ is defined as (\ref{eq_cc_012_ext2}) and 
    $\mathcal{K}(z)$ is given by 
    \begin{equation}
    \begin{split}
    \mathcal{K}(z)
    =  
     \begin{dcases}
		\frac{1}{1+\mathcal{W}(z) + \frac{\hat{N} z}{\pi \lambda_0}},
     &\text{for }\alpha = 2,\\
		\frac{\sqrt{\pi} \Theta}{1 + \mathcal{W}(z)} \exp\left(\Theta^2\right) \mathrm{erfc}\left( \Theta \right),
    &\text{for }\alpha = 4,\\
		\frac{1}{1+\mathcal{W}(z)},
    &\text{for interference-limited environments}\\    
	 \end{dcases}.
    \end{split}
    \label{eq_cc_027-ext005}
    \end{equation}
    By substituting (\ref{eq_cc_027-ext004}) to (\ref{eq_cc_027-ext002}), the throughput can be derived. 

\subsection{Moments of the SINR}

	Higher order moments of the SINR are a crucial performance measure which have an important role in the determination of network performance. $\mathbb{E}\left[ \mathrm{SINR}^r \right]$ can be evaluated by using Theorem $1$ with $g(z) = z^r$ and $g_{\mu+i}(z)$ as 
    \begin{equation}
    \begin{split}
      g_{\mu+i}(z) &= \frac{1}{\Gamma(\mu+i)} \frac{\mathrm{d}^{\mu+i}}{\mathrm{d}z^{\mu+i}} z^{\mu+i-1} g(z) 
      = \frac{\Gamma(\mu + i + r)}{\Gamma(r) \Gamma(\mu + i)} z^{r-1}.
    \end{split}
    \label{eq_cc_027-ext006}
    \end{equation}	
    For the case when we have identical channel characteristics across each tier, the $r$-th order moment is simplified to
    \begin{equation}
    \begin{split}
    \mathbb{E}\left[ \mathrm{SINR}^r \right] = 
    \sum_{n=0}^{\infty} C_n \sum_{i=0}^{n} (-1)^{i} \binom{n}{i} \frac{\Gamma(\mu + i + r)}{\Gamma(r) \Gamma(\mu + i)}
		\int_{0}^{\infty} z^{r-1} \mathcal{K}(z) \mathrm{d}z,    
	\end{split}
    \label{eq_cc_027-ext007}
    \end{equation}
    where $\mathcal{K}(z)$ is defined in (\ref{eq_cc_027-ext005}). Using moments of the SINR, the moment generating function (MGF) of the SINR can be obtained as follows 
	\begin{equation}
    \begin{split}
    \mathcal{M}_{\mathrm{SINR}}(t) = 
    \mathbb{E}\left[ \mathrm{e}^{t \times \mathrm{SINR}} \right] = 
    \sum_{r=0}^{\infty} \frac{t^r}{r!} \mathbb{E}\left[ \mathrm{SINR}^r \right].    
	\end{split}
    \label{eq_cc_027-ext008}
    \end{equation}

\subsection{Outage Probability and Rate Coverage Probability}

The outage probability and rate coverage probability are defined as 
    \begin{equation}
    \begin{split}
    &P_o(T_o) =   \mathbb{P}\left( \mathrm{SINR} < T_o \right) = \sum_{k = 1}^{K} \mathcal{P}_k \mathbb{P}\left( \mathrm{SINR}_k < T_o \right),\quad
    R_c = \mathbb{P}\left( \mathcal{R} > T_r \right), 
	\end{split}
    \label{eq_cc_027-ext011}
    \end{equation}
respectively for a predefined SINR threshold $T_o$ and rate threshold $T_r$. Theoretically, one can use Theorem $1$ to calculate (\ref{eq_cc_027-ext011}) by approximating the step function with a smooth sigmoid function, \textit{i.e.}, $g(z) = \mathbb{I}(z < T_o) \simeq \frac{1}{1+\mathrm{e}^{-\epsilon(z - T_o)}}$, where $\epsilon$ controls the sharpness. However, even with a smooth function, $g_{\mu+i}(z)$ behaves like an impulse signal for a large derivation order $\mu+i$ \cite{Minai1993}. Hence, most numerical software will present a precision overflow while evaluating (\ref{eq_cc_024}).

Instead of using Theorem $1$, it appears more convenient to use the two-step method based on Campbell's theorem \cite{Haenggi2013,Andrews2011} for the outage and rate coverage probability analysis as follows:\\ \textit{\textbf{Step 1)}} 
the conditional SINR distributions $\mathbb{P}\left( \mathrm{SINR}_k < T_o \right)$ in (\ref{eq_cc_027-ext011}) can be evaluated as 
    \begin{equation}
    \begin{split}
&\int_{0}^{\infty} \mathbb{P}\left( \mathrm{SINR}_k < T_o | ~\|y_k^{\ast}\| = r \right) f_{\|y_k^{\ast}\|}(r) \mathrm{d}r
    = \int_{0}^{\infty} \mathbb{P}\left( h _{y_k^{\ast}} < T r^{\alpha} (I + \hat{N}) \right)
    f(r) \mathrm{d}r\\
    = &\sum_{n=0}^{\infty} \sum_{i=0}^{n} b_{i, n} \int_{0}^{\infty} \mathbb{E}_{t}\left[ t^{\mu+i} \mathrm{e}^{-t}\right] f(r) \mathrm{d}r + 
	\sum_{n = 0}^{\infty} \frac{1}{\Gamma(\mu + n + 1)} \int_{0}^{\infty} \mathbb{E}_{t}\left[ t^{\mu+n} \mathrm{e}^{-t}\right] f(r) \mathrm{d}r,
	\end{split}
    \label{eq_cc_027-ext013}
    \end{equation}
where we used (\ref{eq_cc_002}) in the second equality and $t = T r^{\alpha} (I + \hat{N})$, (\ref{eq_cc_011}), (\ref{eq_app_I-002-a-ext4}) in the last equality. The tier association probability $\mathcal{P}_k$ and the PDF of the link length $f_{\|y_k^{\ast}\|}(r)$ are derived in (\ref{eq_cc_001-ext3}).
\textit{\textbf{Step 2)}} The term $\mathbb{E}_{t}\left[ t^{n} \mathrm{e}^{-t}\right]$ in (\ref{eq_cc_027-ext013}) can be evaluated as follows
    \begin{equation}
   \begin{split}
   &\mathbb{E}_{t}\left[ t^{n} \mathrm{e}^{-t} \right] = (-1)^n \left. \frac{\partial^n \mathcal{L}_{t}(s)}{\partial s^n}\right|_{s = 1},\quad
   \mathcal{L}_{t}(s) = \mathbb{E}\left[ \mathrm{e}^{-s T r^{\alpha} (I + \hat{N})}\right] = \mathrm{e}^{-s T r^{\alpha} \hat{N}} \mathcal{L}_{I}\left( s T r^{\alpha} \right),
   \end{split}
   \label{eq_cc_027-ext014}
 \end{equation}
 where $\mathcal{L}_{I}\left( s \right)$ is derived in (\ref{eq_cc_012_ext1}).
Based on the Leibniz rule, we can interchange the order of the integral and derivative as follows 
    \begin{equation}
    \begin{split}
	\int_{0}^{\infty} \mathbb{E}_{t}\left[ t^{n} \mathrm{e}^{-t}\right] f_{\|y_k^{\ast}\|}(r) \mathrm{d}r = (-1)^n \left. \frac{\partial^n}{\partial s^n}
	\int_{0}^{\infty}
	\mathrm{e}^{-s T r^{\alpha} \hat{N}} \mathcal{L}_{I}\left( s T r^{\alpha} \right) f_{\|y_k^{\ast}\|}(r) \mathrm{d}r
	\right|_{s = 1}.
	\end{split}
    \label{eq_cc_027-ext015}
    \end{equation}
Again assuming identical channel characteristics across each tier, (\ref{eq_cc_027-ext015}) can be simplified as 
    \begin{equation}
    \begin{split}
	\int_{0}^{\infty} \mathbb{E}_{t}\left[ t^{n} \mathrm{e}^{-t}\right] f_{\|y_k^{\ast}\|}(r) \mathrm{d}r = (-1)^n \left. \frac{\partial^n }{\partial s^n} \mathcal{K}\left( s T_o\right)
	\right|_{s = 1} = (-1)^n \left. \mathcal{K}^{(n)}\left( s T_o\right)
	\right|_{s = 1},
	\end{split}
    \label{eq_cc_027-ext016}
    \end{equation}
and the outage probability of a $K$-tier HetNet can be expressed in a succinct form as below
    \begin{equation}
    \begin{split}
    P_o(T_o) = \sum_{n=0}^{\infty} \sum_{i=0}^{n} b_{i, n} 
	(-1)^n \left. \mathcal{K}^{(\mu+i)}\left( s T_o\right)
	\right|_{s = 1} + 
	\sum_{n = 0}^{\infty} \frac{(-1)^n \left. \mathcal{K}^{(\mu+n)}\left( s T_o\right)
	\right|_{s = 1}}{\Gamma(\mu + n + 1)},	
	\end{split}
    \label{eq_cc_027-ext017}
    \end{equation}
where $\mathcal{K}(z)$ and $b_{i, n}$ are defined in (\ref{eq_cc_027-ext005}) and (\ref{eq_cc_012}), respectively\footnote{The $n$-th order derivatives in  (\ref{eq_cc_027-ext017}) can be numerically evaluated by using Faa di Bruno's formula \cite{Huang2006}, which is a well-known and widely accepted technique to calculate the interference functional \cite{Tanbourgi2014,Dhillon2013}}.

\section{Numerical Results}

In this section, we present numerical evaluations of the theoretical results and compare them with Monte-Carlo simulations. All of the numerical results presented in this paper were obtained by using the Julia language which provides fast computation time and easy syntax  that is similar to Python and Matlab \cite{Bezanson2015}. In our analysis, we considered a two-tier HetNet with BS intensity $\lambda_1 = 2 \lambda_2$, transmit power $P_2 = P_1 - 20$ dB, a path-loss exponent $\alpha = 4$ and lognormal distributed $\chi$ with $\mu_l = 0$ dB. Without the loss of generality, we assumed identical fading and shadowing parameters for both tiers. \figref{fig:fig2} (a)-(c) compare the spectral efficiency versus the channel parameter $\kappa$ for lognormal, gamma, and inverse-Gaussian distributed large-scale shadowing coefficient $\chi$. For a small $\sigma$, we note that the gamma and inverse-Gaussian distributed shadowing accurately approximate the rate performance of a link which experiences lognormal distributed shadowing. However, given a large $\sigma$, there is a notable discrepancy between the rate of the lognormal distribution and the others. We also observe that the rate performance gap between lognormal, gamma, and inverse-Gaussian distribution becomes wider as the $m$ parameter of the $\kappa$-$\mu$ shadowed fading decreases, and vice versa. 

\figref{fig:fig1} compares the spectral efficiency and average SINR across a wide range of channel parameters $(\kappa, \mu, m)$. In Figs \ref{fig:fig1} (a)-(f), we considered an interference-limited environment where the aggregate interference power is larger than the noise power. We note that a strong dominant LOS component (large $\kappa$) and rich scattering (large $\mu$) collectively achieve a higher rate. However, the average rate decreases on a weak shadowing condition (large $m$), which may at first seem counter-intuitive. Small $m$ indicates a strong random fluctuation on the dominant component, which decreases not only the received signal power but also the aggregate interference power, increasing the SINR level, and eventually achieving higher spectral efficiency. In contrast, given a large $m$, random fluctuation of the dominant component subsides and $\kappa$-$\mu$ shadowed fading reduces to $\kappa$-$\mu$  fading, which increases the interference power, deteriorating the received SINR level as well as the average rate. 

Figs \ref{fig:fig1} (g)-(i) compare the spectral efficiency versus the macro BS intensity $\lambda_1$ without interference-limited conditions. As conjectured in Remark 2, the spectral efficiency becomes invariant for a large BS intensity $\lambda_1$. In a dense network with a large BS intensity, the aggregate interference becomes significantly larger than the noise power, achieving an interference-limited condition. Additionally we observe that the BS intensity required to reach the rate asymptote is inversely proportional to the operating SNR level. For a high SNR regime, the average rate reaches the asymptote around $\lambda_1 = 10^{-2}$, whereas in a low SNR regime, a large number of BSs ($\lambda_1 \geq 10^{-1}$) are required to obtain sufficiently larger interference power than the noise. 

Similarly, Fig \ref{fig:fig1} (j)-(l) compare the average SINR versus the SNR for various BS intensities $\lambda_1$, where we calculated the SNR at a unit distance, \textit{i.e.}, $\text{SNR} = \frac{\mathbb{E}\left[\chi\right] \bar{h}}{\hat{N}}$ with $\|x_k^{\ast}\| = 1$. For a dense network ($\lambda_1 \geq 10^{-2}$), the aggregate interference surpasses the noise power even at low SNR levels, resulting in an interference-limited environment with a constant average SINR. This is in contrast to a sparse network ($\lambda_1 \leq 10^{-4}$), where each BS can increase their transmit power even further than the dense network without saturating the average SINR. Nonetheless, the average SINR level of a sparse network is much lower than that of a dense network. For example the average SINR is about $0.2$ for $\lambda_1 = 10^{-4}$ and $2.0$ for $\lambda_1 = 10^{-2}$ with $\kappa = 6$ and $\text{SNR} = 15$ dB. 


\section{Conclusion}

In this paper, we have considered a cellular network in which the signal fluctuation is the result of large-scale and LOS shadowing to encapsulate the diverse range of channel conditions that can occur in 5G communications. We applied a Laguerre polynomial series expansion to represent the $\kappa$-$\mu$ shadowed fading distribution as a simplified series expression. Based on the series expressions, we then proposed a novel stochastic geometric method to evaluate the average of an arbitrary function of the SINR over $\kappa$-$\mu$ shadowed fading channels. The proposed method is numerically efficient, can be easily applied to other network models, and can evaluate any performance measure that can be represented as a function of SINR. Using the proposed method, we have evaluated the spectral efficiency, moments of the SINR, bit error probability and outage probability of a $K$-tier HetNet with $K$ classes of BSs, differing in terms of the transmit power, BS density, shadowing characteristics and small-scale fading. Finally, we provided numerical results and investigated the performance over a range of channel parameters and observed that a dominant LOS component (large $\kappa$), rich scattering environment (large $\mu$) and strong shadowing condition (small $m$) collectively provides high spectral efficiency.

The analytical framework proposed in this paper can be applied to practical use cases of 5G communications, where Rayleigh fading fails to fully capture the diverse nature of the underlying channel environment. The effect of diverse channel conditions on the second order interference statistics is also an important measure, which needs to be studied to optimize the network performance. Furthermore, the proposed framework can be extended to multi-slope pathloss model, which will provide an accurate channel model for practical communications. 

\section*{Acknowledgment}
The work of Y. J. Chun, S. L. Cotton and S. K. Yoo was supported in part by the Engineering and Physical Sciences Research Council (EPSRC) under Grant References EP/L026074/1, and in part by the Dept. for Employment and Learning Northern Ireland through Grant No. USI080. The work of H. S. Dhillon was supported by US NSF (Grants CCF-1464293, CNS-1617896, IIS-1633363).

\section*{Appendix I}

In this appendix, we summarize the operational equalities of the special functions, which are used in this paper\footnote{Most of the expressions in Appendix I were introduced in \cite{Gradshteyn1994}, except for (\ref{eq_app_I-002-a-ext5}) and (\ref{eq_app_I-002-a-ext6}), which were proved in \cite{Saad2003}.}. First, the generalized Laguerre polynomial of degree $n$ and order $\beta$ has the following functional identities
        \begin{align}
		&L_{n}^{\beta}(t) = \sum_{i = 0}^{n} (-1)^i \binom{n+\beta}{n-i} \frac{t^i}{i!},
		\label{eq_app_I-001-a}
		\\
		&t^{\beta} \exp\left( -t \right) L_{n}^{\beta}(t) \mathrm{d}t = \frac{1}{n}
		\mathrm{d}\left[ t^{\beta+1} \exp\left( -t \right) L_{n-1}^{\beta+1}(t) \right].
        \label{eq_app_I-001-b}
        \end{align}
The following properties of hypergeometric function hold for real constants $a, b$ and $c$
        \begin{align}
        &\Hypergeometric{1}{1}{a}{b}{t} = \mathrm{e}^{t} \Hypergeometric{1}{1}{b-a}{b}{-t}, \quad 
		\Hypergeometric{2}{1}{a, b}{c}{z} = (1-z)^{-a} \Hypergeometric{2}{1}{a, c-b}{c}{\frac{z}{z-1}},
		\label{eq_app_I-002-a-ext1}
		\\
        &\int_{0}^{\infty} t^{\alpha-1} \mathrm{e}^{-c t} \Hypergeometric{1}{1}{a}{b}{-t} \mathrm{d}t =
   c^{-\alpha} \Gamma(\alpha) \Hypergeometric{2}{1}{a, \alpha}{b}{-\frac{1}{c}}, \quad \alpha > 0 ~\mathrm{and}~ c > 0,
        \label{eq_app_I-002-b}
        \\
		&
		\resizebox{0.9\textwidth}{!}{$
		\left( (a-b) z + c - 2 a \right) \Hypergeometric{2}{1}{a, b}{c}{z} = 
		\left( c-a \right) \Hypergeometric{2}{1}{a-1, b}{c}{z} + 
		a \left( z-1 \right) \Hypergeometric{2}{1}{a+1, b}{c}{z}$},
		\label{eq_app_I-002-a-ext2}
		\\
		&
		\int_{0}^{\infty} \mathrm{e}^{-(a x^2 + b x)}\mathrm{d}x = 
		\frac{1}{2}\sqrt{\frac{\pi}{\alpha}} \exp\left( \frac{b^2}{4 a}\right) 
		\mathrm{erfc}\left( \frac{b}{2 \sqrt{a}} \right), \quad a > 0 ~\mathrm{and}~ b > 0.
		\label{eq_app_I-002-b-ext1}		
        \end{align}
The lower incomplete gamma function $\gamma(s, x) = \int_{0}^{x} t^{\mu-1} \mathrm{e}^{-t} \mathrm{d}t$ has the following series representation and functional identity for arbitrary positive real constant $s$
        \begin{align}
		&\frac{\gamma(s, x)}{\Gamma(s)} = \sum_{n = 0}^{\infty}
		\frac{x^{s+n} \mathrm{e}^{-x}}{\Gamma(s + n + 1)}, \qquad 
		\gamma(s, x) = s^{-1} x^s \mathrm{e}^{-x} \Hypergeometric{1}{1}{1}{1+s}{x}.
		\label{eq_app_I-002-a-ext4}
        \end{align}

The binomial coefficient can be defined for real constants $x, y$ using the gamma function as 
        \begin{align}
		\binom{x}{y} = \frac{\Gamma(x+1)}{\Gamma(y+1)\Gamma(x-y+1)},\quad
		\Gamma(t) = \int_{0}^{\infty}x^{t-1} \mathrm{e}^{-x} \mathrm{d}x.
        \label{eq_app_I-003}
        \end{align}
Appell's function $F_2\left( \Bigcdot \right)$ is defined via the Pochhammer symbol $(x)_{n} = \frac{\Gamma(x+n)}{\Gamma(x)}$ as follows
        \begin{align}
		&F_2\left( \alpha; \beta, \beta^{'}; \gamma, \gamma^{'}; x, y \right) = \sum_{m=0}^{\infty} \sum_{n=0}^{\infty}
		\frac{\left( \alpha \right)_{m+n} \left( \beta \right)_{m} \left( \beta^{'} \right)_{n}}{m!~ n! \left( \gamma \right)_{m} \left( \gamma^{'} \right)_{n}} x^m y^n.
		\label{eq_app_I-002-a-ext5}
        \end{align}
Appell's function can be reduced to the hypergeometric function using the following properties
  \begin{equation}
  \begin{split}
	F_2\left( d; a, a^{'}; c, c^{'}; 0, y \right) &= \Hypergeometric{2}{1}{d, a^{'}}{c^{'}}{y},\quad
	F_2\left( d; a, a^{'}; c, c^{'}; x, 0 \right) = \Hypergeometric{2}{1}{d, a}{c}{x}.
  \end{split}
  \label{eq_app_I-002-a-ext6}
  \end{equation}
The following integration holds under the following constraints $d > 0$ and $|k| + |k|^{'} < |h|$  
\begin{equation}
  \begin{split}
  \int_{0}^{\infty} t^{d-1} \mathrm{e}^{-h t} \Hypergeometric{1}{1}{a}{b}{k t} \Hypergeometric{1}{1}{a^{'}}{b^{'}}{k^{'} t} \mathrm{d}t = h^{-d} \Gamma(d) F_2\left( d; a, a^{'}; b, b^{'}; \frac{k}{h}, \frac{k^{'}}{h} \right).
  \end{split}
  \label{eq_app_I-002-a-ext7}
  \end{equation}
Gaussian quadratures can be used to evaluate the following integral for a given analytic function $g(x)$ as 
        \begin{align}
		\text{Gauss-Laguerre Quadrature}; &\quad 
		\int_{0}^{\infty} \mathrm{e}^{-x} g(x) \mathrm{d}x = \sum_{n=1}^{N} w_n f\left( x_n \right) + R_N,
		\label{eq_app_I-002-a-ext9}	        
        \end{align}
  where $x_n$ and $w_n$ are the $n$-th abscissa and weight of the $N$-th order Laguerre polynomial.

\section*{Appendix II}

In this appendix, we provide a proof of Lemma $1$. The PDF of $h$ for $\kappa$-$\mu$ shadowed fading in (\ref{eq_cc_005}) can be represented in the orthogonal series expansion form as
  \begin{equation}
  \begin{split}
	f_h(x) = \sum_{n=0}^{\infty} C_n \left(
	\frac{n! ~ L_n^{\mu-1}(x)}{\Gamma(n+\mu)}
	\right) x^{\mu-1} \exp(-x), \quad 0 \leq x < \infty,
  \end{split}
  \label{eq_app_II-001}
  \end{equation}
where we applied the Laguerre polynomial series expansion in \cite[eq.9]{Chai2009} and the coefficient $C_n$ is evaluated by substituting (\ref{eq_cc_005}) as follows \cite[eq.8]{Chai2009}
\begin{equation}
	\begin{split}
	C_n &= \int_{0}^{\infty} L_n^{\mu-1}(x) f_h(x) \mathrm{d}x\\
   &=
 \frac{\theta_1^{m-\mu} }{\theta_2^m \Gamma(\mu)}
  \underbrace{
 \int_{0}^{\infty}
 x^{\mu-1}  \exp\left( -\frac{x}{\theta_1}\right) L_n^{\mu-1}(x)
 \Hypergeometric{1}{1}{m}{\mu}{\frac{\theta_2 - \theta_1}{\theta_1 \theta_2}x}\mathrm{d}x}_{\mathrm{I}_1}.
 \end{split}
	\label{eq_app_II-002}
\end{equation}
The integral $\mathrm{I}_1$ can be simplified by using the series representation of $L_n^{\mu-1}(x)$ in (\ref{eq_app_I-001-a}) as follows
\begin{equation}
	\begin{split}
	\mathrm{I}_1 
	 &= \sum_{i=0}^{n} \frac{(-1)^i}{i!} \binom{n+\mu-1}{n-i}
	 \frac{\theta_2^m \Gamma(\mu)}{\theta_1^{m-\mu}} \mathbb{E}\left[ h^i \right],
	\end{split}
	\label{eq_app_II-003}
\end{equation}
where we used (\ref{eq_cc_005}) to express the integral as the PDF of the $\kappa$-$\mu$ shadowed fading in the last equality. Then, by substituting (\ref{eq_app_II-003}) into (\ref{eq_app_II-002}), the coefficient $C_n$ in (\ref{eq_cc_012}) can be derived after algebraic manipulation. The series expansion form in (\ref{eq_app_II-001}) can be further simplified by using (\ref{eq_app_I-001-a}) and (\ref{eq_app_I-003}) as follows
  \begin{equation}
  \begin{split}
	f_h(x) &= x^{\mu-1} \exp(-x)
	\sum_{n=0}^{\infty}
	\frac{n!~C_n}{\Gamma(n+\mu)}
	\left(	\sum_{i=0}^{n}(-1)^{\mu-1} \binom{n+\mu-1}{n-i}\frac{x^i}{i!}	\right)
	\\
	&=
	\sum_{n=0}^{\infty}  \sum_{i=0}^{n}
	\frac{(-1)^i ~ C_n}{\Gamma(\mu+i)} \binom{n}{i} x^{\mu+i-1} \exp(-x)
  \end{split}
  \label{eq_app_II-004}
  \end{equation}
which achieves (\ref{eq_cc_010}).

The CDF of $h$ can be evaluated as follows
  \begin{equation}
  \begin{split}
	F_h(x) &= \int_{0}^{x} f_h(t) \mathrm{d}t =
	\sum_{n=0}^{\infty}
	\frac{n! ~ C_n}{\Gamma(n+\mu)}
	\int_{0}^{x}
	t^{\mu-1} \exp(-t) L_n^{\mu-1}(t) \mathrm{d}t\\
	&=\sum_{n=1}^{\infty}	 \frac{\Gamma(n) C_n }{ \Gamma(n+\mu)}
	x^{\mu} \exp(-x) L_{n-1}^{\mu}(x) + 
  \frac{C_0}{\Gamma(\mu)}
  \int_{0}^{x}
  t^{\mu-1} \exp(-t) L_0^{\mu-1}(t) \mathrm{d}t \\
	&= \sum_{n=0}^{\infty} \sum_{i=0}^{n} \frac{(-1)^i \Gamma(n+1) C_{n+1}}{i!~ \Gamma(n+\mu+1)}
	\binom{n+\mu}{n-i} x^{\mu+i} \exp(-x) + \frac{\gamma(\mu, x)}{\Gamma(\mu)},
  \end{split}
  \label{eq_app_II-005}
  \end{equation}
where we used (\ref{eq_app_II-001}) in the second equality, utilized (\ref{eq_app_I-001-b}) in the third equality, applied a change of variable, \textit{i.e.}, $n^{'} \leftarrow n-1$, $C_0 = 1$, $L_0^{\mu-1}(t) = 1$ and (\ref{eq_app_I-001-a}) in the last equality. The coefficient $b_{i, n}$ can be simplified by using (\ref{eq_app_I-003}) as
  \begin{equation}
  \begin{split}
b_{i, n} = \frac{(-1)^i \Gamma(n+1) C_{n+1}}{i!~ \Gamma(n+\mu+1)}
	\binom{n+\mu}{n-i} = \frac{(-1)^i C_{n+1}}{\Gamma(\mu+i+1)}
	\binom{n}{i},
  \end{split}
  \label{eq_app_II-006}
  \end{equation}
then the CDF in (\ref{eq_cc_011}) can be subsequently obtained. This completes the proof.

  \section*{Appendix III}

  In this appendix, we provide a proof of Lemma $2$. Due to (\ref{eq_cc_001-ext2}), all interfering BS within the $j$-th tier are located further than $\hat{P}_j^{\frac{1}{\alpha}} \|y_k^{\ast}\|$ where $y_k^{\ast}$ denote the associated $k$-th tier BS and $\hat{P}_j = \frac{P_j}{P_k}$ is the transmit power ratio between the interfering and serving BS
  \begin{equation}
  \begin{split}
  &P_j \|y\|^{-\alpha} < P_k \|y_k^{\ast}\|^{-\alpha} ~\text{for any } y \in \Phi_j^{(e)}\backslash\{ y_k^{\ast}\} \quad 
  \leftrightarrow \quad \|y\| > \hat{P}_j^{\frac{1}{\alpha}} \|y_k^{\ast}\|.
  \end{split}
  \label{eq_app_III-001-ext1}
  \end{equation}
  The Laplace transform of the interference from the $j$-th tier is given by 
  \begin{equation}
  \begin{split}
  \mathcal{L}_{I_j}(s) &= \mathbb{E}\left[ \exp(-s 
 	   	\sum_{y \in \Phi_j^{(e)}\backslash\{ y_k^{\ast}\}}  \hat{P}_j h_{y} \|y\|^{-\alpha}
 	   	)\right]\\ 
 	   	&= \exp\left[ -2\pi \lambda_j \mathbb{E}[\chi_j^{\delta}]
 	   	\int_{\hat{P}_j^{\frac{1}{\alpha}} r}^{\infty} \left( 
 	   	1 - \mathbb{E}_h\left[\exp\left( -s \hat{P}_j h l^{-\alpha} \right) \right]
 	   	\right) l \mathrm{d}l
 	   	\right]\\
 	   	&= \exp\left[ -\pi \lambda_j \mathbb{E}[\chi_j^{\delta}] \hat{P}_j^{\delta}~
 	   	\mathbb{E}_h\left\{ 
 	   	\left( s h \right)^{\delta}
 	   	\int_{0}^{s h r^{-\alpha}}
 	   	\delta t^{-\delta-1} (1-\mathrm{e}^{-t}) \mathrm{d}t
 	   	\right\}
 	   	\right]\\ 	   	
 	   	&= \exp\left[ -\pi r^2 \lambda_j \mathbb{E}[\chi_j^{\delta}] \hat{P}_j^{\delta}~
 	   	\mathbb{E}_h\left\{ 
 	   	\left( s h r^{-\alpha} \right)^{\delta}
 	   	\gamma(1-\delta, s h r^{-\alpha})
 	   	- \left( 1 - \mathrm{e}^{-s h r^{-\alpha}} \right)
 	   	\right\}
 	   	\right], 	   	
  \end{split}
  \label{eq_app_III-001-ext2}
  \end{equation}
where we represented the distance to the serving BS as $\|y_k^{\ast}\| = r$ in the second equality, applied a change of variable, \textit{i.e.}, $s \hat{P}_j h l^{-\alpha} = t$, in the third equality, then used integration by parts. 

The first part of the expectation term in (\ref{eq_app_III-001-ext2}) can be evaluated as follows
  \begin{equation}
  \begin{split}
 	   	&\mathbb{E}_h\left[
 	   	\left( s h r^{-\alpha} \right)^{\delta}
 	   	\gamma(1-\delta, s h r^{-\alpha})
 	   	\right] \\
= &\frac{\left( s \theta_1 r^{-\alpha} \right)^{\delta}}{\Gamma(\mu)} 
\left( \frac{\theta_1}{\theta_2}\right)^m
\int_{0}^{\infty}
t^{\delta+\mu-1}
\mathrm{e}^{-t}
\Hypergeometric{1}{1}{m}{\mu}{\frac{\mu \kappa}{\mu \kappa+m}t}
\gamma(1-\delta, s \theta_1 r^{-\alpha} t) 
\mathrm{d}t\\
= &\frac{s \theta_1 r^{-\alpha} \left( \theta_1/\theta_2\right)^m}{(1-\delta)\Gamma(\mu)} 
\int_{0}^{\infty}
t^{\mu}
\mathrm{e}^{-(1+s\theta_1 r^{-\alpha})t}
\Hypergeometric{1}{1}{m}{\mu}{\frac{\mu \kappa}{\mu \kappa+m}t}
\Hypergeometric{1}{1}{1}{2-\delta}{s \theta_1 r^{-\alpha} t}
\mathrm{d}t \\
= &\frac{\mu}{(1-\delta)} \frac{s\theta_1 r^{-\alpha}}{(1+s\theta_1 r^{-\alpha})^{\mu+1}} 
\left( \frac{\theta_1}{\theta_2}\right)^m
F_2\left( \mu+1; m, 1; \mu, 2-\delta; A, B\right)
  \end{split}
  \label{eq_app_III-001-ext3}
  \end{equation}
where we used the PDF of $\kappa$-$\mu$ shadowed fading with a change of variable, \textit{i.e.}, $\frac{h}{\theta_1} = t$ in the first equality, applied (\ref{eq_app_I-002-a-ext4}) to the second equality, utilized the integration (\ref{eq_app_I-002-a-ext7}) in the last equality \cite{Saad2003}, $A = \frac{1-\theta_1/\theta_2}{1+\theta_1 s r^{-\alpha}}$ and $B = \frac{\theta_1 s r^{-\alpha}}{1+\theta_1 s r^{-\alpha}}$. The second part of the expectation term in (\ref{eq_app_III-001-ext2}) follows directly by using the Laplace transform of $\kappa$-$\mu$ shadowed channel coefficient (\ref{eq_cc_005}). By denoting $s r^{-\alpha} = z$, (\ref{eq_cc_012_ext1}) and (\ref{eq_cc_012_ext2}) can be achieved. This completes the proof.

  \section*{Appendix IV}

  In this appendix, we provide a proof of Lemma $4$. First, we consider Nakagami-\textit{m} fading which corresponds to the case when $\kappa \rightarrow 0, \mu = m$ in Table 1. Then $\theta_1 = \theta_2 = \frac{\bar{h}}{m}$ and $A = \frac{1-\theta_1/\theta_2}{1+\theta_1 z} \rightarrow 0$. By applying (\ref{eq_app_I-002-a-ext6}) and (\ref{eq_app_I-002-a-ext1}), (\ref{eq_cc_012_ext2}) can be simplified to the following form 
	\begin{equation}
	   \begin{split}
	   F_2\left( \mu+1; m, 1; \mu, 2-\delta; A, B \right) = \left( 1 + \theta_1 z\right)^{\mu+1} \Hypergeometric{2}{1}{\mu+1, 1-\delta}{2-\delta}{-\theta_1 z}.
	   \end{split}
		\label{eq_app_IV-001-ext1}
	 \end{equation}
$\mathcal{W}_j(z)$ for Nakagami-\textit{m} fading can be obtained by substituting (\ref{eq_app_IV-001-ext1}) in (\ref{eq_cc_012_ext2}) together with $\kappa \rightarrow 0, \mu = m$. For One-sided Gaussian fading, (\ref{eq_cc_012_ext6}) is obtained by substituting $\mu = 0.5$ in (\ref{eq_cc_012_ext5}). $\mathcal{W}_j(z)$ for Rayleigh fading in (\ref{eq_cc_012_ext4}) can be obtained by substituting $\mu = 1$ in (\ref{eq_cc_012_ext5}), then applying (\ref{eq_app_I-002-a-ext2}) and $\Hypergeometric{2}{1}{0, b}{c}{x} = 1$, which achieves an identical result to \cite[eq. (44)]{Jo2012}. 

Next, we show that $\kappa$-$\mu$ fading corresponds to the case of $m \rightarrow \infty$ with the following limit 
	\begin{equation}
	   \begin{split}
			\lim_{m \rightarrow \infty} \left( \frac{\theta_1}{\theta_2}\right)^m &= 
			\lim_{m \rightarrow \infty} \left( 1 + \frac{\mu \kappa}{m}\right)^{-m} = \mathrm{e}^{-\mu \kappa}\\
			\lim_{m \rightarrow \infty} \left( \frac{1+\theta_1 s}{1+\theta_2 s}\right)^m 
			&= 
			\lim_{m \rightarrow \infty} \left( 
			1 + \frac{\mu \kappa s}{m (s+\theta_1^{-1})}\right)^{-m} = 
			\exp\left(-\frac{\mu \kappa s}{s+\theta_1^{-1}}\right).
	   \end{split}
		\label{eq_app_IV-001-ext3}
	 \end{equation}
By utilizing (\ref{eq_app_IV-001-ext3}) and (\ref{eq_app_IV-001-ext1}) in (\ref{eq_cc_012_ext2}), 
(\ref{eq_cc_012_ext7}) can be derived for $m \rightarrow \infty$. $\mathcal{W}_j(z)$ for Rician fading readily follows by substituting $\kappa = K$ and $\mu = 1$ in (\ref{eq_cc_012_ext7}). This completes the proof.

  \section*{Appendix V}

  In this appendix, we provide a proof of Theorem $1$.
  The average of an arbitrary function of the SINR $\frac{h_{x_0}  \|x_0\|^{-\alpha}}{I+N}$ is written as follows
    \begin{equation}
    \begin{split}
      &\mathbb{E}\left[ \left. g\left( \frac{h_{x_0}  r^{-\alpha}}{I+N} \right) \right\vert I,  \|x_0\| = r \right]
      = \int_{0}^{\infty} g\left( \frac{x r^{-\alpha}}{I+N}\right) f_{h}(x) \mathrm{d}x\\
      &\qquad= \sum_{n=0}^{\infty} \sum_{i=0}^{n} c_{i, n}
      \int_{0}^{\infty}
      ~x^{\mu+i-1} \mathrm{e}^{-x}
      g\left( \frac{x r^{-\alpha}}{I+N}\right) \mathrm{d}x\\
      &\qquad=
      \sum_{n=0}^{\infty} \sum_{i=0}^{n} c_{i, n}
      \int_{0}^{\infty}
      z^{\mu+i-1} g(z) (r^{\alpha} (I+N))^{\mu+i} \mathrm{e}^{-r^{\alpha} (I+N) z} \mathrm{d}z\\
      &\qquad=
      \sum_{n=0}^{\infty} C_n \sum_{i=0}^{n} (-1)^i \binom{n}{i}
      \int_{0}^{\infty}
      \frac{z^{\mu+i-1} g(z)}{\Gamma(\mu + i)}
     (r^{\alpha} (I+N))^{\mu+i} \mathrm{e}^{-r^{\alpha} (I+N) z} \mathrm{d}z,
    \end{split}
    \label{eq_app_VIII-001}
    \end{equation}
    where (\ref{eq_cc_010}) is applied in the second equality, a change of variable, \textit{i.e.},
    $\frac{x r^{-\alpha}}{I+N} = z$, is utilized in the third equality, and (\ref{eq_cc_012}) is employed in the last equality. (\ref{eq_app_VIII-001}) can be evaluated as follows
    \begin{equation}
    \begin{split}
    \int_{0}^{\infty} \underbrace{\frac{z^{\mu+i-1}}{\Gamma(\mu+i)} g\left( z \right)}_{u} ~
      \underbrace{
      \vphantom{\frac{z^{\mu+i-1}}{\Gamma(\mu+i)} g\left( z \right)}
      b^{\mu + i} \mathrm{e}^{-b z}}_{v'} \mathrm{d}x
      &=
      \left. -\sum_{k = 0}^{\mu+i-1} g_k(z) b^{\mu+i-k-1} \mathrm{e}^{-bz} \right\vert_{0}^{\infty} + \int_{0}^{\infty}
      g_{\mu+i}(z)\mathrm{e}^{-bz} \mathrm{d}z,
    \end{split}
    \label{eq_app_VIII-004}
    \end{equation}
  where we denoted $b = r^{\alpha} (I+N)$, applied integration by parts, defined $g_k(z)$ in (\ref{eq_cc_025}), and
    \begin{equation}
      g_k(0) =
      \begin{dcases}
      0, & \text{for } k < \mu+i-1\\
      g(0), & \text{for } k = \mu+i-1
      \end{dcases}.
    \label{eq_app_VIII-005}
    \end{equation}
  Then, the average of an arbitrary function of the SINR is given by
    \begin{equation}
    \begin{split}
      &\mathbb{E}\left[ g\left( \frac{h_{x_0}  \|x_0\|^{-\alpha}}{I+N} \right) \right] =
      \mathbb{E}\left[ \mathbb{E}\left[ \left. g\left( \frac{h_{x_0}  r^{-\alpha}}{I+N} \right)
      \right\vert \|x_0\| = r \right]\right]\\
      &=\sum_{n=0}^{\infty} C_n \sum_{i=0}^{n} (-1)^i \binom{n}{i}
      \int_{0}^{\infty} g_{\mu+i}(z)
      \mathrm{e}^{-r^{\alpha} N z} \mathcal{L}_{I}\left( r^{\alpha} z\right) f_{\|x_0\|}(r) \mathrm{d}r
    \end{split}
    \label{eq_app_VIII-006}
    \end{equation}
  where we used $\sum_{i=0}^{n} (-1)^i \binom{n}{i} = 0$ in the second equality. This completes the proof.

  \section*{Appendix VI}
	In this appendix, we provide a proof of Lemma $5$. By substituting (\ref{eq_cc_001-ext3}) and (\ref{eq_cc_012_ext1}) to (\ref{eq_cc_024}), the expectation term $\mathbb{E}_{r}\left[ \mathrm{e}^{-r^{\alpha} \hat{N} z} \mathcal{L}_{I}\left( r^{\alpha} z\right) \right]$ can be evaluated as follows
    \begin{equation}
    	\begin{split}
    	&\mathbb{E}_{r}\left[ \mathrm{e}^{-r^{\alpha} \hat{N} z} \mathcal{L}_{I}\left( r^{\alpha} z\right) \right] = 
    	\int_{0}^{\infty} \mathrm{e}^{-r^{\alpha} \hat{N} z} \mathcal{L}_{I}\left( r^{\alpha} z\right) f_{\|y_k^{\ast}\|}(r) \mathrm{d}r\\
    	= 
		&\frac{2 \pi \lambda_k \mathbb{E}\left[ \chi_k^{\delta}\right]}{\mathcal{P}_k} \int_{0}^{\infty} r ~\mathrm{e}^{-r^{\alpha} \hat{N} z}
		    \exp\left[ 
		    -\sum_{j \in \mathcal{K}} \pi r^2 \lambda_j \mathbb{E}\left[ \chi_j^{\delta} \right] 
		    \hat{P}_j^{\delta} \left( 1 +  \mathcal{W}_j(z) \right) 
		\right] 
		\mathrm{d}r\\
		= & \frac{\lambda_k \mathbb{E}\left[ \chi_k^{\delta}\right]}{\mathcal{P}_k} \int_{0}^{\infty} 
		    \exp\left[ 
		    - t^{\frac{1}{\delta}} \frac{\hat{N} z}{\pi^{\frac{\alpha}{2}}} 
		    - t \left( \sum_{j \in \mathcal{K}} \lambda_j \mathbb{E}\left[ \chi_j^{\delta} \right] 
		    \hat{P}_j^{\delta} \left( 1 +  \mathcal{W}_j(z) \right)
		    \right)
		\right] 
		\mathrm{d}r,  
    	\end{split}
    \label{eq_cc_026-ext008}
    \end{equation}
    where we used a change of variable, \textit{i.e.}, $t = \pi r^2$ in the last equality. If $\alpha = 2$, then (\ref{eq_cc_026-ext008}) becomes
    \begin{equation}
    	\begin{split}
	&\frac{\lambda_k \mathbb{E}\left[ \chi_k^{\delta}\right]}{\mathcal{P}_k} \int_{0}^{\infty} 
		    \mathrm{e}^{
		    - t \left( \frac{\hat{N} z}{\pi}  + \sum_{j \in \mathcal{K}} \lambda_j \mathbb{E}\left[ \chi_j^{\delta} \right] 
		    \hat{P}_j^{\delta} \left( 1 +  \mathcal{W}_j(z) \right)
		    \right)
		}
		\mathrm{d}r\\
	= &\frac{\lambda_k \mathbb{E}\left[ \chi_k^{\delta}\right]/\mathcal{P}_k}{\sum_{j \in \mathcal{K}} \lambda_j \mathbb{E}\left[ \chi_j^{\delta} \right] 
		    \hat{P}_j^{\delta} \left( 1 +  \mathcal{W}_j(z) \right) 
		    + \frac{\hat{N} z}{\pi}},   
    	\end{split}
    \label{eq_cc_026-ext009}
    \end{equation}
		which achieves (\ref{eq_cc_026-ext003}) by substituting $\mathcal{P}_k$ from (\ref{eq_cc_001-ext3}). Similarly, $\mathbb{E}_{r}\left[ \mathrm{e}^{-r^{\alpha} \hat{N} z} \mathcal{L}_{I}\left( r^{\alpha} z\right) \right]$ for $\alpha = 4$ can be obtained by applying (\ref{eq_app_I-002-b-ext1}). 
	Given an interference-limited condition, (\ref{eq_cc_026-ext008}) reduces to 
    \begin{equation}
    	\begin{split}
		\frac{\lambda_k \mathbb{E}\left[ \chi_k^{\delta}\right]}{\mathcal{P}_k} \int_{0}^{\infty} 
		    \mathrm{e}^{
		     - t \left( \sum_{j \in \mathcal{K}} \lambda_j \mathbb{E}\left[ \chi_j^{\delta} \right] 
		    \hat{P}_j^{\delta} \left( 1 +  \mathcal{W}_j(z) \right)
		    \right)
		} 
		\mathrm{d}r = \frac{\lambda_k \mathbb{E}\left[ \chi_k^{\delta}\right]/\mathcal{P}_k }{\sum_{j \in \mathcal{K}} \lambda_j \mathbb{E}\left[ \chi_j^{\delta} \right] \hat{P}_j^{\delta} \left( 1+\mathcal{W}_j(z) \right)},
    	\end{split}
    \label{eq_cc_026-ext010}
    \end{equation}
	whereas for noise-limited condition, (\ref{eq_cc_026-ext008}) can be written as 
    \begin{equation}
    	\begin{split}
	\frac{\lambda_k \mathbb{E}\left[ \chi_k^{\delta}\right]}{\mathcal{P}_k} \int_{0}^{\infty} 
		    \exp\left[ 
		    - t^{\frac{1}{\delta}} \frac{\hat{N} z}{\pi^{\frac{\alpha}{2}}} 
		    - t \left( \sum_{j \in \mathcal{K}} \lambda_j \mathbb{E}\left[ \chi_j^{\delta} \right] 
		    \hat{P}_j^{\delta} \right)
		\right] 
		\mathrm{d}r.
    	\end{split}
    \label{eq_cc_026-ext011}
    \end{equation}
	(\ref{eq_cc_026-ext007}) readily follows by substituting $\mathcal{W}_j(z) \rightarrow 0$ in (\ref{eq_cc_026-ext003}). This completes the proof.

\bibliographystyle{IEEEtran}
\bibliography{bib1}

\begin{thebibliography}{10}
\providecommand{\url}[1]{#1}
\csname url@samestyle\endcsname
\providecommand{\newblock}{\relax}
\providecommand{\bibinfo}[2]{#2}
\providecommand{\BIBentrySTDinterwordspacing}{\spaceskip=0pt\relax}
\providecommand{\BIBentryALTinterwordstretchfactor}{4}
\providecommand{\BIBentryALTinterwordspacing}{\spaceskip=\fontdimen2\font plus
\BIBentryALTinterwordstretchfactor\fontdimen3\font minus
  \fontdimen4\font\relax}
\providecommand{\BIBforeignlanguage}[2]{{%
\expandafter\ifx\csname l@#1\endcsname\relax
\typeout{** WARNING: IEEEtran.bst: No hyphenation pattern has been}%
\typeout{** loaded for the language `#1'. Using the pattern for}%
\typeout{** the default language instead.}%
\else
\language=\csname l@#1\endcsname
\fi
#2}}
\providecommand{\BIBdecl}{\relax}
\BIBdecl

\bibitem{5GPPP2016}
5GPPP, ``{5G empowering vertical industries},'' Tech. Rep., 2016.

\bibitem{Nokia2014}
{Nokia White Paper}, ``{5G use cases and requirements},'' Tech. Rep., 2014.

\bibitem{Haenggi2013}
M.~Haenggi, \emph{{Stochastic geometry for wireless networks}}.\hskip 1em plus
  0.5em minus 0.4em\relax Cambridge University Press, 2012.

\bibitem{Andrews2010}
J.~G. Andrews, R.~K. Ganti, M.~Haenggi, N.~Jindal, and S.~Weber, ``{A primer on
  spatial modeling and analysis in wireless networks},'' \emph{IEEE Commun.
  Mag.}, vol.~48, no.~11, pp. 156--163, Nov. 2010.

\bibitem{Mathar1995}
R.~Mathar and J.~Mattfeldt, ``{On the distribution of cumulated interference
  power in Rayleigh fading channels},'' \emph{Wirel. Networks}, vol.~1, no.~1,
  pp. 31--36, Mar. 1995.

\bibitem{Andrews2011}
J.~G. Andrews, F.~Baccelli, and R.~K. Ganti, ``{A tractable approach to
  coverage and rate in cellular networks},'' \emph{IEEE Trans. Commun.},
  vol.~59, no.~11, pp. 3122--3134, Nov. 2011.

\bibitem{Jo2012}
H.~S. Jo, Y.~J. Sang, P.~Xia, and J.~G. Andrews, ``{Heterogeneous cellular
  networks with flexible cell association: a comprehensive downlink SINR
  analysis},'' \emph{IEEE Trans. Wirel. Commun.}, vol.~11, no.~10, pp.
  3484--3494, Oct. 2012.

\bibitem{Mukherjee2012}
S.~Mukherjee, ``{Distribution of downlink SINR in heterogeneous cellular
  networks},'' \emph{IEEE J. Sel. Areas Commun.}, vol.~30, no.~3, pp. 575--585,
  apr 2012.

\bibitem{Dhillon2012}
H.~S. Dhillon, R.~K. Ganti, F.~Baccelli, and J.~G. Andrews, ``{Modeling and
  analysis of K-tier downlink heterogeneous cellular networks},'' \emph{IEEE J.
  Sel. Areas Commun.}, vol.~30, no.~3, pp. 550--560, 2012.

\bibitem{Chun2015}
Y.~Chun, M.~Hasna, and A.~Ghrayeb, ``{Modeling heterogeneous cellular networks
  interference using poisson cluster processes},'' \emph{IEEE J. Sel. Areas
  Commun.}, vol.~33, no.~10, pp. 1--1, Oct. 2015.

\bibitem{Saha2016}
\BIBentryALTinterwordspacing
C.~Saha, M.~Afshang, and H.~S. Dhillon, ``{Enriched $K$-Tier HetNet Model to
  Enable the Analysis of User-Centric Small Cell Deployments},'' jun 2016.
  [Online]. Available: \url{http://arxiv.org/abs/1606.06223}
\BIBentrySTDinterwordspacing

\bibitem{Andrews2016}
\BIBentryALTinterwordspacing
J.~G. Andrews, A.~K. Gupta, and H.~S. Dhillon, ``{A primer on cellular network
  analysis using stochastic geometry},'' pp. 1--38, 2016. [Online]. Available:
  \url{http://arxiv.org/abs/1604.03183}
\BIBentrySTDinterwordspacing

\bibitem{Elsawy2013}
H.~Elsawy, E.~Hossain, and M.~Haenggi, ``{Stochastic geometry for modeling,
  analysis, and design of multi-tier and cognitive cellular wireless networks:
  A survey},'' \emph{IEEE Commun. Surv. Tutorials}, vol.~15, no.~3, pp.
  996--1019, 2013.

\bibitem{Haenggi2009}
M.~Haenggi, J.~G. Andrews, F.~Baccelli, O.~Dousse, and M.~Franceschetti,
  ``Stochastic geometry and random graphs for the analysis and design of
  wireless networks,'' \emph{IEEE J. Sel. Areas on Commun.}, no.~7, pp.
  1029--1046.

\bibitem{Blaszczyszyn2013}
B.~Blaszczyszyn and H.~P. Keeler, ``{Equivalence and comparison of
  heterogeneous cellular networks},'' \emph{IEEE Int. Symp. Pers. Indoor Mob.
  Radio Commun. PIMRC}, pp. 153--157, 2013.

\bibitem{Keeler2013}
H.~P. Keeler, B.~Blaszczyszyn, and M.~K. Karray, ``{SINR-based $k$-coverage
  probability in cellular networks with arbitrary shadowing},'' \emph{IEEE Int.
  Symp. Inf. Theory - Proc.}, pp. 1167--1171, 2013.

\bibitem{Madhusudhanan2014a}
P.~Madhusudhanan, J.~G. Restrepo, Y.~Liu, T.~X. Brown, and K.~R. Baker,
  ``{Downlink performance analysis for a generalized shotgun cellular
  system},'' \emph{IEEE Trans. Wirel. Commun.}, vol.~13, no.~12, pp.
  6684--6696, 2014.

\bibitem{Dhillon2014}
H.~S. Dhillon and J.~G. Andrews, ``{Downlink rate distribution in heterogeneous
  cellular networks under generalized cell selection},'' \emph{IEEE Wirel.
  Commun. Lett.}, vol.~3, no.~1, pp. 42--45, 2014.

\bibitem{Zhang2014}
X.~Zhang and M.~Haenggi, ``{A stochastic geometry analysis of inter-cell
  interference coordination and intra-cell diversity},'' \emph{IEEE Trans.
  Wirel. Commun.}, vol.~13, no.~12, pp. 6655--6669, 2014.

\bibitem{Peng2014}
M.~Peng, Y.~Li, T.~Q.~S. Quek, and C.~Wang, ``{Device-to-device underlaid
  cellular networks under rician fading channels},'' \emph{IEEE Trans. Wirel.
  Commun.}, vol.~13, no.~8, pp. 4247--4259, Aug. 2014.

\bibitem{Tanbourgi2014}
R.~Tanbourgi, H.~S. Dhillon, J.~G. Andrews, and F.~K. Jondral, ``{Dual-branch
  MRC receivers under spatial interference correlation and nakagami fading},''
  \emph{IEEE Trans. Commun.}, vol.~62, no.~6, pp. 1830--1844, 2014.

\bibitem{Miel1985}
G.~Miel and R.~Mooney, ``{On the condition number of Lagrangian numerical
  differentiation},'' \emph{Appl. Math. Comput.}, vol.~16, no.~3, pp. 241--252,
  1985.

\bibitem{Ho1993}
M.-J. Ho and G.~Stuber, ``{Co-channel interference of microcellular systems on
  shadowed Nakagami fading channels},'' \emph{IEEE 43rd Veh. Technol. Conf.},
  pp. 568--571, 1993.

\bibitem{Abdi1998}
A.~Abdi and M.~Kaveh, ``{K distribution: an appropriate substitute for
  Rayleigh-lognormal distribution in fading-shadowing wireless channels},''
  \emph{Electron. Lett.}, vol.~34, no.~9, pp. 851--852, 1998.

\bibitem{Abdi2003}
A.~Abdi, W.~C. Lau, M.~S. Alouini, and M.~Kaveh, ``{A new simple model for land
  mobile satellite channels: first- and second-order statistics},''
  \emph{Wirel. Commun. IEEE Trans.}, vol.~2, no.~3, pp. 519--528, 2003.

\bibitem{Paris2014}
J.~F. Paris, ``{Statistical characterization of $\kappa$-$\mu$ shadowed
  fading},'' \emph{IEEE Transactions on Vehicular Technology}, vol.~63, no.~2,
  pp. 518--526, feb 2014.

\bibitem{6953146}
S.~Cotton, ``Human body shadowing in cellular device-to-device communications:
  channel modeling using the shadowed $\kappa-\mu$ fading model,''
  \emph{Selected Areas in Communications, IEEE Journal on}, vol.~33, no.~1, pp.
  111--119, Jan 2015.

\bibitem{Sanchez2014}
A.~S{\'{a}}nchez, E.~Robles, F.~Rodrigo, F.~Ruiz-Vega,
  U.~Fern{\'{a}}ndez-Plazaola, and J.~Paris, ``{Measurement and modeling of
  fading in ultrasonic underwater Channels},'' \emph{Int. Conf. Exhib. Underw.
  Acoust.}, pp. 1213--1218, 2014.

\bibitem{Cotton2014}
S.~L. Cotton, ``{Shadowed fading in body-to-body communications channels in an
  outdoor environment at 2.45 GHz},'' in \emph{2014 IEEE-APS Top. Conf.
  Antennas Propag. Wirel. Commun.}, vol.~44, no.~0.\hskip 1em plus 0.5em minus
  0.4em\relax IEEE, aug 2014, pp. 249--252.

\bibitem{Simon2004}
M.~K. Simon and M.-S. Alouini, \emph{{Digital Communication over Fading
  Channels}}.\hskip 1em plus 0.5em minus 0.4em\relax Wiley-IEEE Press, 2004.

\bibitem{Moreno-Pozas2015}
\BIBentryALTinterwordspacing
L.~Moreno-Pozas, F.~J. Lopez-Martinez, J.~F. Paris, and E.~Martos-Naya, ``{The
  $\kappa$-$\mu$ shadowed fading model: unifying the $\kappa$-$\mu$ and
  $\eta$-$\mu$ distributions},'' pp. 1--10, 2015. [Online]. Available:
  \url{http://arxiv.org/abs/1504.05764}
\BIBentrySTDinterwordspacing

\bibitem{Kumar2015}
S.~Kumar, ``{Approximate outage probability and capacity for $\kappa-\mu$
  shadowed fading},'' \emph{IEEE Wireless Communications Letters}, vol.~4,
  no.~3, pp. 301--304, jun 2015.

\bibitem{Parthasarathy2016}
\BIBentryALTinterwordspacing
S.~Parthasarathy and R.~K. Ganti, ``{SIR distribution in downlink Poisson point
  cellular network with $\kappa-\mu$ shadowed fading},'' may 2016. [Online].
  Available: \url{http://arxiv.org/abs/1605.03763}
\BIBentrySTDinterwordspacing

\bibitem{Chai2009}
C.~Chai and T.~Tjhung, ``{Unified laguerre polynomial-series-based distribution
  of small-scale fading envelopes},'' \emph{IEEE Transactions on Vehicular
  Technology}, vol.~58, no.~8, pp. 3988--3999, 2009.

\bibitem{Abdi2009}
A.~Abdi, ``{On the utility of Laguerre series for the envelope pdf in multipath
  fading channels},'' \emph{IEEE Trans. Inf. Theory}, vol.~55, no.~12, pp.
  5652--5662, dec 2009.

\bibitem{Lopez-Martinez2016}
\BIBentryALTinterwordspacing
F.~J. Lopez-Martinez, J.~F. Paris, and J.~M. Romero-Jerez, ``{The
  $\kappa$-$\mu$ shadowed fading model with integer fading parameters},'' sep
  2016. [Online]. Available: \url{http://arxiv.org/abs/1609.00317}
\BIBentrySTDinterwordspacing

\bibitem{Saad2003}
\BIBentryALTinterwordspacing
N.~Saad and R.~L. Hall, ``{Integrals containing confluent hypergeometric
  functions with applications to perturbed singular potentials},'' \emph{J.
  Phys. A Math.}, vol.~36, no. June, p.~20, 2003. [Online]. Available:
  \url{http://arxiv.org/abs/math-ph/0306043}
\BIBentrySTDinterwordspacing

\bibitem{sympy}
\BIBentryALTinterwordspacing
``{SymPy}.'' [Online]. Available: \url{http://www.sympy.org/en/index.html}
\BIBentrySTDinterwordspacing

\bibitem{Gradshteyn1994}
I.~S. Gradshteyn and I.~M. Ryzhik, \emph{{Table of integrals, series, and
  products}}, 5th~ed.\hskip 1em plus 0.5em minus 0.4em\relax Academic Press,
  1994.

\bibitem{Hamdi2007}
K.~A. Hamdi, ``{A useful technique for interference analysis in Nakagami
  fading},'' \emph{IEEE Trans. Commun.}, vol.~55, no.~6, pp. 1120--1124, Jun.
  2007.

\bibitem{Chun2016}
\BIBentryALTinterwordspacing
Y.~J. Chun, S.~L. Cotton, H.~S. Dhillon, A.~Ghrayeb, and M.~O. Hasna, ``{A
  stochastic geometric analysis of device-to-device communications operating
  over generalized fading channels},'' 2016. [Online]. Available:
  \url{http://arxiv.org/abs/1605.03244}
\BIBentrySTDinterwordspacing

\bibitem{Singh2013}
S.~Singh, H.~S. Dhillon, and J.~G. Andrews, ``{Offloading in heterogeneous
  networks: Modeling, analysis, and design insights},'' \emph{IEEE Trans.
  Wirel. Commun.}, vol.~12, no.~5, pp. 2484--2497, may 2013.

\bibitem{Minai1993}
A.~A. Minai and R.~D. Williams, ``{On the derivatives of the sigmoid},''
  \emph{Neural Networks}, vol.~6, no.~6, pp. 845--853, 1993.

\bibitem{Huang2006}
H.~N. Huang, S.~A.~M. Marcantognini, and N.~J. Young, ``{Chain rules for higher
  derivatives},'' \emph{Math. Intell.}, vol.~28, no.~2, pp. 61--69, 2006.

\bibitem{Dhillon2013}
H.~S. Dhillon, M.~Kountouris, and J.~G. Andrews, ``{Downlink MIMO HetNets:
  Modeling, ordering results and performance analysis},'' \emph{IEEE Trans.
  Wirel. Commun.}, vol.~12, no.~10, pp. 5208--5222, oct 2013.

\bibitem{Bezanson2015}
\BIBentryALTinterwordspacing
J.~Bezanson, A.~Edelman, S.~Karpinski, and V.~B. Shah, ``{Julia: A Fresh
  Approach to Numerical Computing},'' \emph{arXiv}, pp. 1--37, nov 2014.
  [Online]. Available: \url{http://arxiv.org/abs/1411.1607}
\BIBentrySTDinterwordspacing

\end{thebibliography}

\clearpage 

\begin{table}[!t]
    \centering
    \caption{Special Cases of the $\kappa$-$\mu$ Shadowed Fading Model.}
    \begin{tabular}{| l|c |c |c |}
    \hline
    ~ & $\kappa$-$\mu$ fading & $\eta$-$\mu$ fading & $\kappa$-$\mu$ shadowed fading\\
    \hline
    Rayleigh & $\kappa \rightarrow 0, \mu = 1$ & $\eta = 1, \mu = 0.5$ &
    \begin{tabular}{@{}c@{}}
    $\kappa \rightarrow 0, \mu = 1$ or \\
    $m = 1, \mu = 1$
    \end{tabular}
    \\
    \hline
    Nakagami-\textit{m} & $\kappa \rightarrow 0, \mu = m$ &
    \begin{tabular}{@{}c@{}}
    $\eta = 1, \mu = m/2$ or \\
    $\eta \rightarrow 0, \mu = m$
    \end{tabular}
    &
    \begin{tabular}{@{}c@{}}
    $\kappa \rightarrow 0, \mu = m$ or \\
    $m \rightarrow m, \mu = m$
    \end{tabular} \\
    \hline
    Nakagami-\textit{n} (Rice) & $\mu = 1$ & ~ & $\kappa = K, \mu = 1, m \rightarrow \infty$ \\
    \hline
    Nakagami-\textit{q} (Hoyt) & ~ & $\mu = 0.5$ & $\kappa = (1-q^2)/2q^2, \mu = 1, m = 0.5$ \\
    \hline
    One-sided Gaussian & $\kappa \rightarrow 0, \mu = 0.5$ &
    \begin{tabular}{@{}c@{}}
    $\eta \rightarrow 0, \mu = 0.5$ or \\
    $\eta \rightarrow \infty, \mu = 0.5$
    \end{tabular}
    &
    \begin{tabular}{@{}c@{}}
    $\kappa \rightarrow 0, \mu = 0.5$ or \\
    $m = 0.5, \mu = 0.5$
    \end{tabular}
    \\
    \hline
    $\kappa$-$\mu$ fading& $\kappa, \mu$ & ~ & $\kappa \rightarrow \kappa, \mu \rightarrow \mu, m \rightarrow \infty$ \\
    \hline
    $\eta$-$\mu$ fading& ~ & $\eta, \mu$ & $\kappa = (1-\eta)/2\eta, \mu \rightarrow 2\mu, m = \mu$\\
    \hline
    Rician shadowed & ~ & ~ & $\kappa = K, \mu = 1, m = m$\\
    \hline
    \end{tabular}
    \label{tab_channel1}
\end{table}

\begin{table}
\centering
  \caption{Different $g(x)$ and $g_{\mu+i}(x)$ for evaluating various system measures.}
  \begin{tabular}{|l|l|l|}
  \hline
  Measure                            & $g(x)$           & $g_{\mu+i}(x) = \frac{1}{\Gamma(\mu+i)} \frac{\mathrm{d}^{\mu+i}}{\mathrm{d}z^{\mu+i}}
      x^{\mu+i-1} g(x)$ \\ \hline
  Rate                               & $\log(1+x)$       & $\frac{1}{x} \left( 1 - \frac{1}{(1+x)^{\mu+i}}\right)$\\ \hline
  Higher order moments                 & $x^r$       &$\frac{\Gamma(\mu+i+r)}{\Gamma(\mu+i) \Gamma(r)} x^{r-1}$
  \\ \hline
  Outage probability                 & $\mathbb{I}(x \leq x_0) \simeq \frac{1}{1+\mathrm{e}^{-\epsilon(x - x_0)}}$      
                                     &  
                                     $\sum_{k=0}^{\mu+i-1}
    \binom{\mu+i}{k}
    \frac{z^{\mu+i-1-k}}{\Gamma(\mu+i-k)}
    \frac{\mathrm{d}^{\mu+i-k}}{\mathrm{d}x^{\mu+i-k}} \mathbb{I}(x \leq x_0)$
                                                   \\ \hline
  \end{tabular}
    \label{tab_gx}
\end{table}

\begin{figure}[!t]
  \centering
    \includegraphics[width=0.8\linewidth]{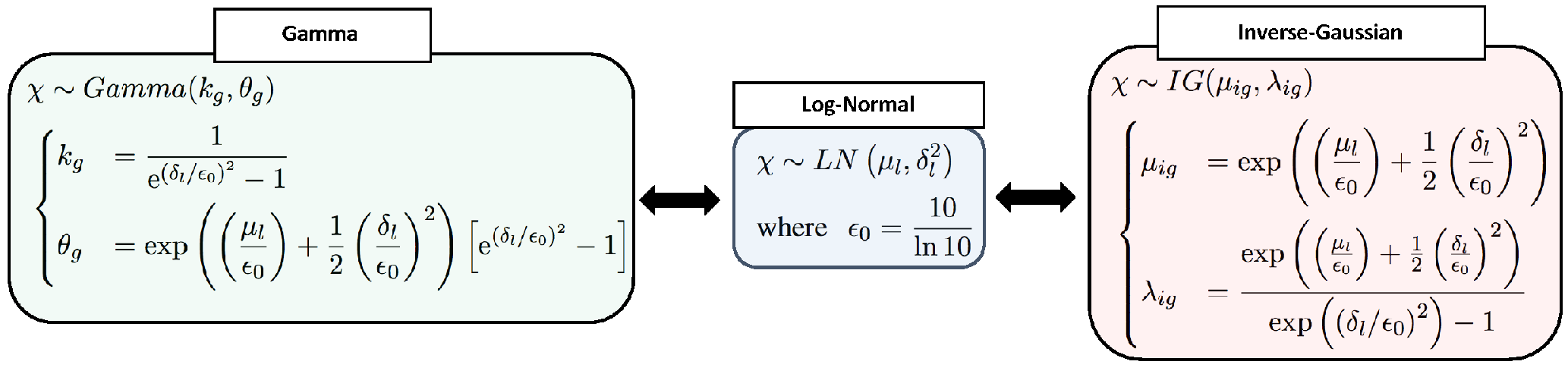}%
    \caption{One-to-one mapping between lognormal, gamma and Inverse-Gaussian shadowing based on the moment matching.}
    \label{fig:shadowing_parameter}
\end{figure}

\begin{figure}[!t]
  \centering
    \includegraphics[height=0.7\myimageheight]{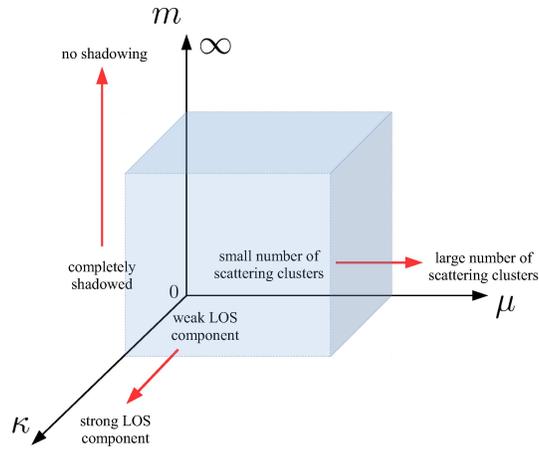}%
    \caption{Physical meaning of the channel parameters $(\kappa, \mu, m)$ in $\kappa$-$\mu$ shadowed fading model.}
    \label{fig:channel_parameter}
\end{figure}

\begin{figure}[!t]
  \centering
    \includegraphics[height=0.7\myimageheight]{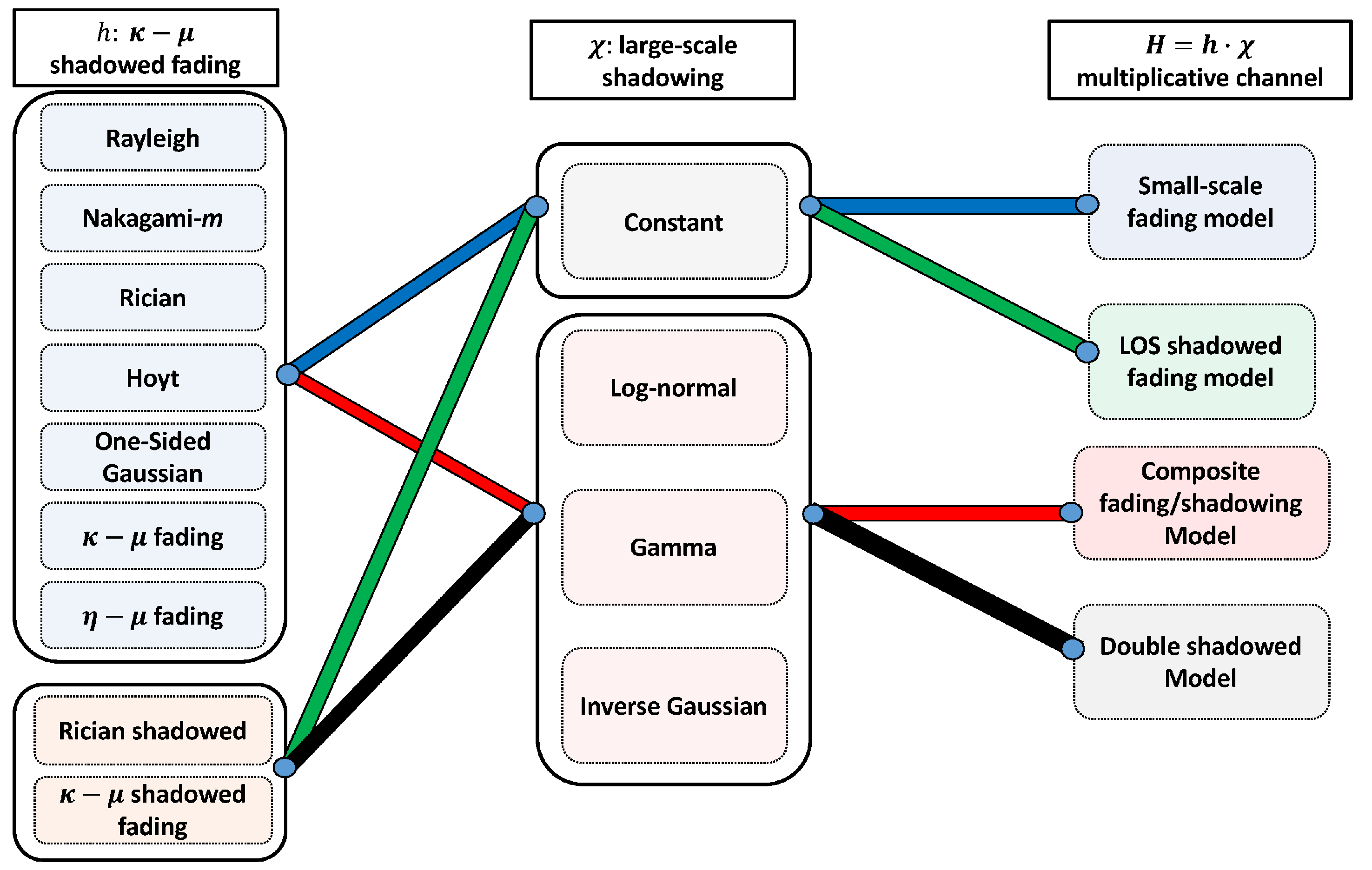}%
    \caption{Versatility of the proposed channel model $H = h \cdot \chi$ with $\kappa$-$\mu$ shadowed fading  $h$ and large-scale shadowing $\chi$.}
    \label{fig:channel_parameter2}
\end{figure}

\begin{figure}[!t]
	\makebox[\textwidth][c]{\includegraphics[width=1.15\textwidth]{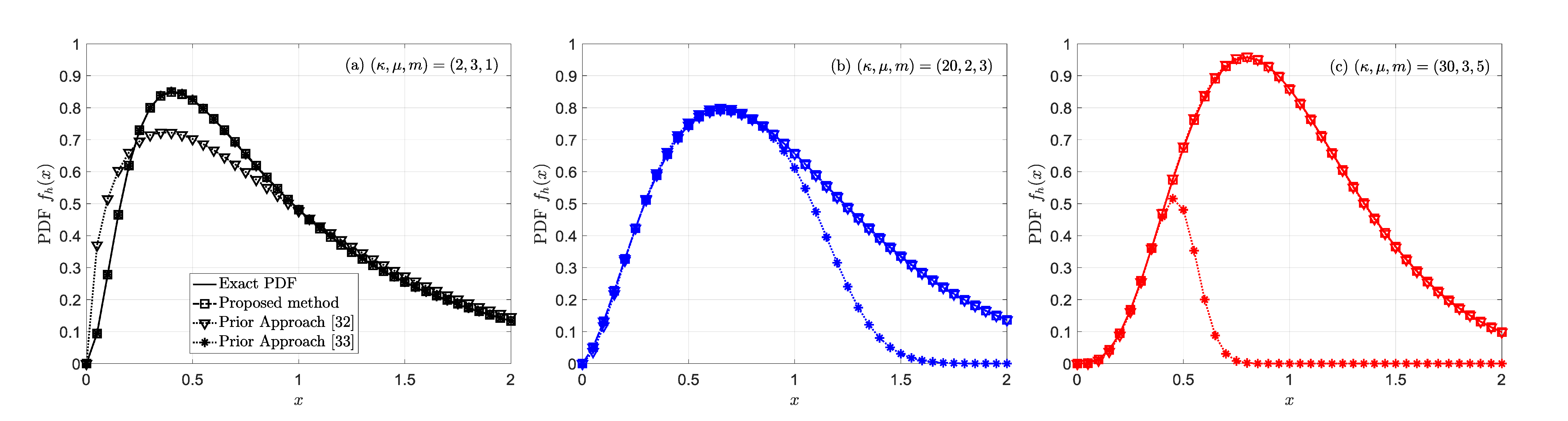}}%
    \caption{Numerical evaluation of the $\kappa$-$\mu$ shadowed fading distribution using the PDFs given in (\ref{eq_cc_005}), (\ref{eq_cc_010}), \cite{Kumar2015}, and \cite{Parthasarathy2016}.}
    \label{fig:pdf_compare}
\end{figure}

\begin{figure}[!t]
	\makebox[\textwidth][c]{\includegraphics[width=1.15\textwidth]{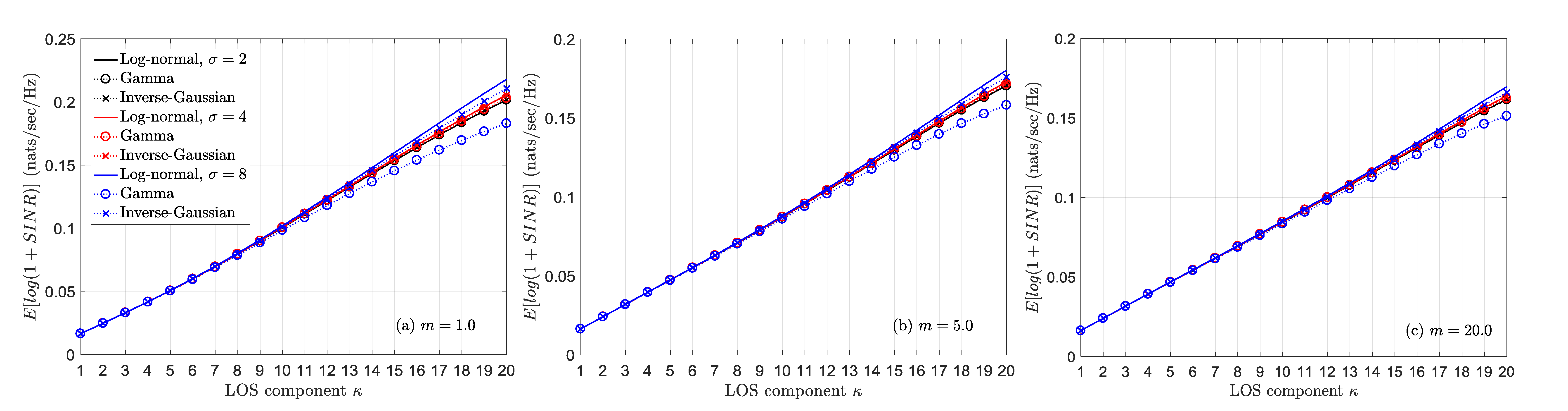}}%
	\caption{Spectral efficiency versus the channel parameter $\kappa$ for different large-scale shadowing distribution $\chi$.}
	\label{fig:fig2}
\end{figure}

\begin{figure}[!t]
	\makebox[\textwidth][c]{\includegraphics[width=1.15\textwidth]{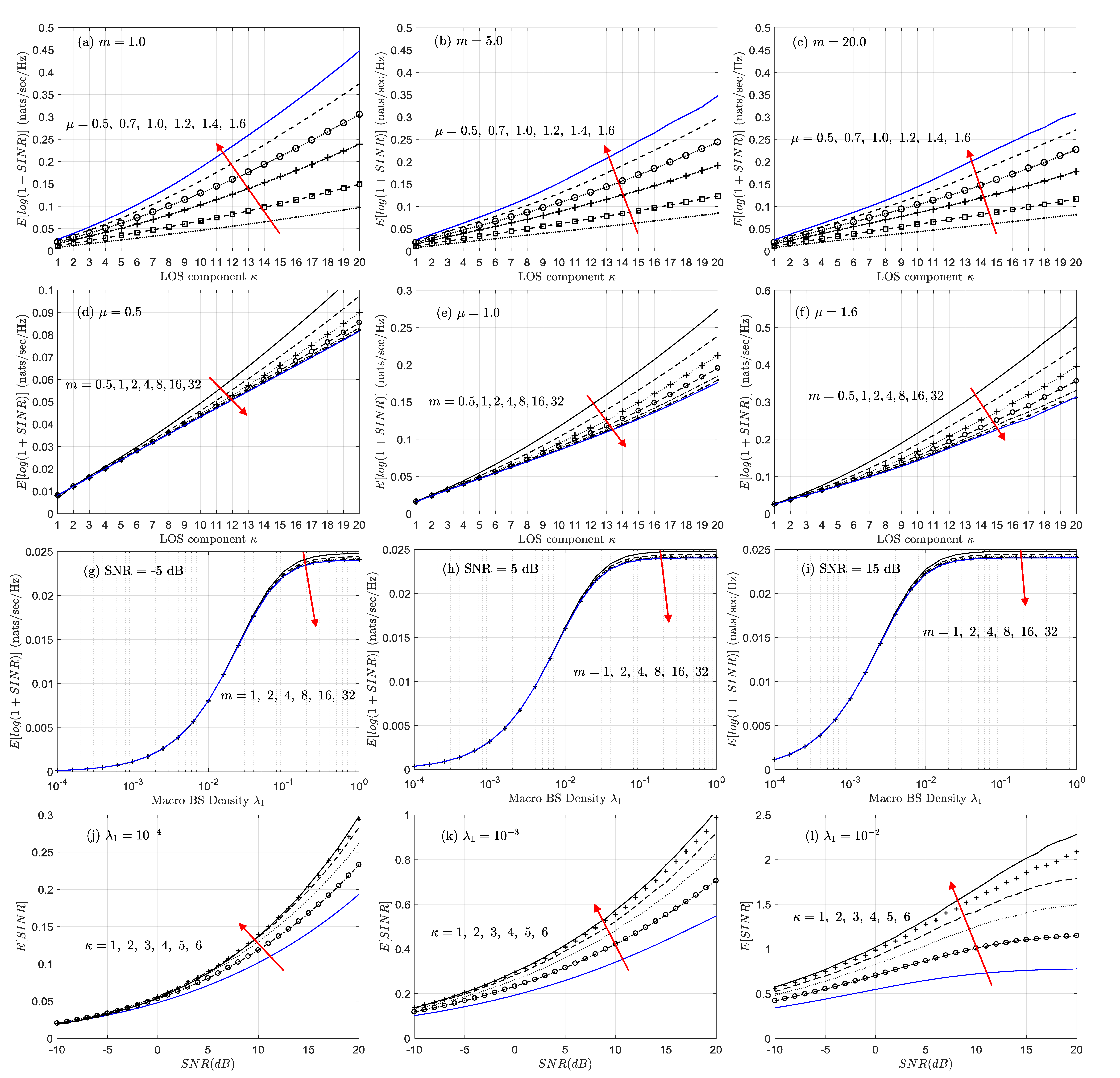}}%
	\caption{Spectral efficiency and average SINR of a two-tier HetNet over various channel parameters $(\kappa, \mu, m)$; (a)-(f) assume interference-limited environment with $\lambda_1 = \frac{1}{\pi 500^2}$, $P_1 = 53$ dBm, $\bar{h} = 1$ and $\sigma _l = 4$ dB. (g)-(i) assume $\kappa = 2$, $\mu = 1$, $\bar{h} = 1$ and $\sigma _l = 4$ dB, whereas (j)-(l) assume $\kappa = 1$, $\mu = 1$, $\bar{h} = 1$ and $\sigma _l = 4$ dB.}
	\label{fig:fig1}
\end{figure}

\end{document}